\newcommand{\id}{\mathbbm{1}} %identity operator
\newcommand{\sch}{\mathsf T}
\newcommand{\A}{\mathsf A}
\newtheorem{theorem}{Theorem}
\newtheorem{corollary}{Corollary}
\newtheorem{lemma}{Lemma}
\DeclareMathOperator{\tr}{tr}
\newcommand{\ket}[1]{| #1 \rangle}
\newcommand{\ketbra}[1]{| #1 \rangle\langle #1 |}
\newcommand{\ad}{\Lambda}
\definecolor{green}{RGB}{7,133,10}
\begin{document}
\title{Continuous variable steering and incompatibility via state-channel duality}
\author{Jukka Kiukas}
\email{jek20@aber.ac.uk}
\affiliation{Department of Mathematics, Aberystwyth University, Aberystwyth SY23 3BZ, UK}
\author{Costantino Budroni}
\email{costantino.budroni@oeaw.ac.at}
\affiliation{Institute for Quantum Optics and Quantum Information (IQOQI), Boltzmanngasse 3 1090 Vienna, Austria}
\author{Roope Uola}
\email{roope.uola@gmail.com}
\affiliation{Naturwissenschaftlich-Technische Fakult\"at, Universit\"at 
 Siegen, Walter-Flex-Str.\ 3, D-57068 Siegen, Germany}
\author{Juha-Pekka Pellonp\"a\"a}
\email{juhpello@utu.fi}
\affiliation{Turku Centre for Quantum Physics, Department of Physics and Astronomy, University of Turku, FI-20014 Turku, Finland }

\begin{abstract}  The term Einstein-Podolsky-Rosen steering refers to a quantum correlation intermediate between entanglement and Bell nonlocality, which has been connected to another fundamental quantum property: measurement incompatibility. In the finite-dimensional case, efficient computational methods to quantify steerability have been developed. In the infinite-dimensional case, however, less theoretical tools are available. Here, we approach the problem of steerability in the continuous variable case via a notion of state-channel correspondence, which generalizes the well-known Choi-Jamio\l{}kowski correspondence. Via our approach we are able to generalize the connection between steering and incompatibility to the continuous variable case and to connect the steerability of a state with the incompatibility breaking property of a quantum channel, with applications to noisy NOON states and amplitude damping channels. Moreover, we apply our methods to the Gaussian steering setting, proving, among other things, that canonical quadratures are sufficient for steering Gaussian states.

%\noindent PACS numbers:
\end{abstract}

\maketitle

\section{Introduction}
The phenomenon of Einstein-Podolsky-Rosen (EPR) steering combines two central features of quantum 
theory: entanglement and incompatibility, namely, the impossibility of determine precisely and simultaneously certain 
properties of a physical system, e.g., position and momentum. 
In practice, steering is a quantum effect by which one experimenter, Alice, can remotely prepare (i.e., steer) an ensemble 
of states for another experimenter, Bob, by performing local measurement on her half of a bipartite system shared by them, 
and communicating the results to Bob \cite{WiJoDo07}.

Due to the fact that steering is a form of quantum correlation intermediate between entanglement and Bell nonlocality 
\cite{Bell64}, it has been proven  useful to solve foundational problems \cite{Pu13,Moro14,VeBr14,HiQu16,CaGu16} and 
important for applications in quantum information processing such as one-sided-device-independent (1SDI) quantum information 
\cite{BrCa12,KoXi17,XiKo17}. 

In the finite-dimensional case, several methods are available to attack the steering problem. In particular,  efficient 
methods based on semidefinite programming \cite{Boydbook} are able to detect and quantify steerability of a given state and 
set of measurements \cite{Pu13,SW_skr, wp, CaSk16b}. Notwithstanding the existence of several methods, see, e.g., \cite{WiJoDo07, Reid89, Cav09, Schnee13, KoLe15,KoSk15} and the review \cite{CaSk16b}, such a systematic approach  is missing in the continuous variable case. 

In this paper, we will develop a general tool for discussing steering in the continuous variable case, which is based on an 
extension of the Choi-Jamio\l{}kowski state-channel duality \cite{Choi75,Jam72,Ho11}. The Choi-Jamio\l{}kowski 
correspondence associates a state to each channel, but not all states can be mapped to a channel in this way. We will extend 
this idea by showing that one can associate to each bipartite state a channel, such that the steerability property of a 
state is equivalent to the property of the corresponding channel being incompatibility breaking \cite{ibc}, when all possible measurements are allowed for steering. 
This result, in turn, extends to the continuous variable case the result on equivalence between steering and joint-measurability \cite{UoMoGu14,QuVeBr14,UoBu15}. 

In addition to these conceptual results, we find that the channel picture reduces seemingly 
different steering problems to a single one. For instance, we show that steerability of noisy NOON-states (cf.\ Ref.~\cite{KoSk15}) corresponds to the decoherence of incompatibility under an amplitude damping channel (cf.\ \cite{AdHe15, BuKi16}), and how to use steering to investigate its Markovianity properties. Using incompatibility techniques
we investigate both analytically and numerically the noise tolerance of these states with two quadrature measurements. Finally, we apply our methods in the continuous variable Gaussian settings, showing that steerability by a pair of canonical quadrature measurements already ensures steerability by all Gaussian measurements, and connecting this to Gaussian incompatibility breaking channels \cite{gibc}. We also show in passing how the method yields an independent proof of the known Gaussian steering criterion \cite{WiJoDo07}.

The paper is organised as follows: We begin by introducing preliminary notions in Section \ref{sec:prel}, including the general formalism for measurements, joint measurability, steering, the formal connection between hidden state models and positive-operator valued measures (POVMs), and quantification of steering and incompatibility. Section \ref{sec:main} contains our main results on the role of state-channel duality in the connection between steering and incompatibility. In section \ref{sec:appl} we present all the above mentioned applications, except for the Gaussian case, which is treated separately in Section \ref{sec:gauss}. Technical proofs of four Lemmas are given in Appendices \ref{Aps:sc-dual}-\ref{Aps:gauss-dual}, and Appendix \ref{App:LHS} contains the derivation of the Gaussian LHS, which is not essential for understanding the main results.

\section{Preliminary notions}\label{sec:prel}

\subsection{Measurements as POVMs}

A POVM with a discrete outcome set $\Lambda$ is a collection $\{G_\lambda\}_{\lambda\in \Lambda}$ of positive semidefinite operators such that $\sum_{\lambda\in \Lambda} G_\lambda =\id$. Such operators 
represent the probability of the outcome $\lambda$ for a measurement on a state $\rho$ via the rule ${\rm Prob}(\lambda)=\tr [\rho M_\lambda]$. This notion is not sufficient for this paper, since we also consider Gaussian measurements. A POVM with a \emph{continuous outcome set} is one for which $\Lambda=\mathbb R^n$, i.e.\ the Euclidean space. This space comes with the usual integration measure $d\lambda$, and a POVM $\{G_\lambda\}_{\lambda\in \Lambda}$ consist of elements $G_\lambda$ which may be ``infinitesimal'' so that, in general, only the integrals $\int_{X} G_\lambda d\lambda$ with $X\subset \Lambda$ define proper operators. This definition clarifies the name  positive-operator valued measure \cite{BuLaMi96}, i.e. a map from measurable sets to positive operators $X\mapsto \int_X G_\lambda d\lambda$ with normalisation $\int_\Lambda G_\lambda d\lambda=\id$ and countable additivity on disjoint sets. In order to illustrate this well-known technical issue with a typical example relevant for the main text, consider the position operator
$Q = \int_{\mathbb R}q |q\rangle\langle q| dq$. The corresponding POVM has elements $|q\rangle\langle q|$, which are \emph{not} proper operators as they map wave functions $\psi$ into improper states $\psi(q) |q\rangle$. The symbols $|q\rangle\langle q|$ only make up operators when integrated into $\int_{[a,b]} |q\rangle\langle q| dq$, which projects $\psi$ into the wave function coinciding with $\psi(q)$ for $a\leq q\leq b$ and vanishing elsewhere.

\subsection{Joint measurability}

A collection of POVMs, indexed by measurement settings $x$, will be denoted as $\mathcal{M}=\{M_{a|x}\}_{a,x}$ and called a {\it measurement assemblage}. In the discrete case, it is said to be  {\it jointly measurable} \cite{BuLaMi96} 
if there is a POVM $\{G_\lambda\}_\lambda$ such that each $M_{a|x}$ can be obtained from $G_\lambda$ via classical postprocessing, i.e., $M_{a|x} = \sum_\lambda D(a|x,\lambda) G_\lambda$ for all $x,a$,
where $D(a|x,\lambda)\geq 0$ and $\sum_a D(a|x,\lambda)=1$. For the continuous case, with $\mathcal A_x$ the set of outcomes for the POVM $M_x$, one has joint measurability if 
\begin{equation}\label{eq:JMc}
M_{X|x} := \int_{X} M_{a|x} da = \int_\Lambda D(X|x,\lambda) G_\lambda d\lambda,
\end{equation}
where the postprocessing $D(\cdot |x,\cdot)= \mathcal{A}_x\times \Lambda \rightarrow [0,1]$ is generally known as a \emph{weak Markov kernel} \cite{Je08}. An assemblage not jointly measurable is called \emph{incompatible}.

\subsection{Quantum steering}

Another main ingredient for our discussions is bipartite quantum steering. Alice can prepare an ensemble of states for Bob by performing a local measurement ($x$) on her half of the bipartite state $\rho$ and communicating the result ($a$) to Bob. This is related to the measurement assemblage $\{A_{a|x}\}_{a,x}$ via ${\varrho(a|x) := \tr_A[(A_{a|x}\otimes \openone) \rho]/P(a|x)}$, where ${P(a|x):=\tr[(A_{a|x}\otimes \openone) \rho]}$ is the probability of the outcome $a$ for the  setting $x$, and $\varrho(a|x)$ is the reduced state obtained by Bob in this case. We call the collection $\{\rho_{a|x}\}_{a,x}$, with $\rho_{a|x}:=  \tr_\text{A}[(A_{a|x}\otimes \openone) \rho]$, a {\it state assemblage}. It satisfies the nonsignalling rule $\rho_{\mbox{\tiny B}} = \sum_a \rho_{a|x}$ for all $x$, with $\rho_{\mbox{\tiny B}}:=\tr_A[\rho]$ the reduced state for Bob. An assemblage $\{\rho_{a|x}\}_{a,x}$ is called {\it unsteerable} if it admits a local hidden state (LHS) 
model~\cite{WiJoDo07}, i.e., a collection of positive operators  $\{ \sigma_\lambda\}_\lambda$ with 
$\tr[\sum_\lambda \sigma_\lambda]=1$ and $\rho_{a|x} = \sum_\lambda D(a|x,\lambda)\ \sigma_\lambda$ for all  $a,x,$
where $D(a|x,\lambda)\geq 0$ and $\sum_a D(a|x,\lambda) =1$. If a LHS model exists, Bob can 
interpret each $\rho_{a|x}$ as coming from some preexisting states $\sigma_\lambda$, where only the classical probabilities 
are updated due to the information obtained by Alice from her measurement. In the continuous case the assemblage consists of operators $\sigma_x(X):= \int_{X} \sigma_{a|x} da$ where $X\subset \mathcal{A}_x$, and the unsteerable case with LHS $\{ \sigma_\lambda\}_\lambda$ is defined by
\begin{equation}
\int_{X} \sigma_{a|x} da = \int_\Lambda D(X|x,\lambda) \sigma_\lambda d\lambda,
\end{equation}
where $D(\cdot|x,\cdot)$ is a weak Markov kernel for each $x$. In the steerable case we also say that the state $\rho$ \emph{is steerable by} the the measurement assemblage $\{A_{a|x}\}_{a,x}$.

Our main results (Th.~\ref{th:asmb} and \ref{th:gau_st} below) can be applied to reduce seemingly different steering problems to a single one. In order to formulate this precisely, we need a few extra notions. First, we say that states $\rho_1$ and $\rho_2$ are \emph{steering-equivalent} if they are steerable by the exact same measurement assemblages $\{A_{a|x}\}_{a,x}$. For a weaker version, suppose instead that there is a quantum channel $\Lambda$ (with Heisenberg picture $\Lambda^*$) such that $\rho_1$ is steerable by an assemblage $\{A_{a|x}\}_{a,x}$ exactly when $\rho_2$ is steerable by $\{\Lambda^*(A_{a|x})\}_{a,x}$. Generalising the notion in \cite{UoBu15} we, then, call $B_{a|x}:=\Lambda^*(A_{a|x})$ the \emph{steering-equivalent observables} (for $A_{a|x}$). A related (state-independent) notion is that of an {\it incompatibility breaking channel} (IBC) \cite{ibc}, namely, a channel $\Lambda$ such that $\{\Lambda^*(A_{a|x})\}_{x,a}$ is jointly measurable for any measurement assemblage $\{A_{a|x}\}_{x,a}$. For instance, entanglement breaking channels \cite{EBC} belong to this class. It is known \cite{ibc} that when such a channel is applied to one side of a maximally entangled state, the resulting state is not steerable by any measurement assemblage. Cor.~\ref{p:stch} (d) extends this to arbitrary states in the broader context of state-channel duality (see below).

\subsection{Hidden state models and measurements in terms of POVMs}\label{subsec:prel_hidden}
We now review the fact that hidden state models and general quantum observables can both be described by POVMs. Since we are interested in the infinite-dimensional case with POVMs having continuous outcome sets, some technical considerations are unavoidable, and we discuss them briefly. These technicalities are not essential for understanding the main text, but they are needed to make the proofs mathematically sound.

The connection between hidden state models and POVMs is fairly obvious when $d<\infty$ and $\Lambda$ is discrete. Suppose now we have a general family $\{\sigma_\lambda\}_{\lambda\in \Lambda}$ of positive operators on Bob's side of a bipartite setting. Here $\Lambda$ is the set of hidden variables, either discrete or continuous as above. \emph{The crucial difference to POVMs is that each $\sigma_\lambda$ is a proper trace class operator, i.e.\ not ``infinitesimal'' even in the continuous case.} The function $\lambda \mapsto \sigma_\lambda$ must satisfy the technical condition of measurability in the trace class norm, to allow the (Bochner) integrals
$\int f(\lambda) \sigma_\lambda d\lambda$ to exist with finite trace for every measurable scalar function $f$ on $\Lambda$. We also assume the normalisation $\sum_\lambda \sigma_\lambda =\sigma$ (discrete case) and $\int \sigma_\lambda d\lambda =\sigma$ (continuous case), where $\sigma$ is again a fixed density operator. Then there exists a unique POVM $G$ with outcomes in $\Lambda$, satisfying
\begin{equation}\label{grep}
\sigma^{\frac 12}G_\lambda\sigma^{\frac 12} = \sigma_\lambda.
\end{equation}
This is clear in the finite-dimensional case with finite outcome set $\Lambda$ --- we just multiply with $\sigma^{-\frac 12}$ which preserves positivity, and normalisation translates into $\sum_\lambda G_\lambda = \id$. For $d=\infty$ we  need a technical density argument analogous to that used in the proof of Lemma~1 below (see Appendix \ref{Aps:sc-dual}). In the case of continuous outcome set, \eqref{grep} is again understood via the corresponding integrals. 

Suppose then that we start with a POVM $\{G_\lambda\}_{\lambda\in \Lambda}$; the question is how to get the states 
$\sigma_\lambda$. If $\Lambda$ is discrete, this is trivial: we define $\sigma_{\lambda}:=\sigma^{\frac 12}G_\lambda\sigma^{\frac 12}$. However, the case of continuous outcome set $\Lambda$ introduces a subtlety: we have to show that the possibly infinitesimal POVM elements $G_\lambda$ yield trace class operators $\sigma_\lambda$. In general, this is nontrivial, and follows from the Radon-Nikodym property of the trace class (cf.\ p.\ 79 of \cite{diestel}). In the relevant case of a position operator (and more generally a Gaussian POVM), this is easier to prove: $\sigma^{\frac 12}|q\rangle\langle q|\sigma^{\frac 12}$ maps $\psi$ into $\langle q|\sigma^{\frac 12}\psi\rangle \sigma^{\frac 12}|q\rangle$, which is indeed a proper wave function since $\sigma^{\frac 12}|q\rangle = \sum_n \sqrt{s_n} \langle n|q\rangle |n\rangle$ has finite norm $\sum_n s_n |\langle n|q\rangle|^2<\infty$ for all $q$ due to $\sum_n s_n <\infty$, assuming the basis functions are continuous (which is the case for the number basis considered in the main text). 
Here, $\sigma = \sum_n s_n \ketbra{n}$ is the eigendecomposition of $\sigma$.

\subsection{Robustness quantification}

Both incompatibility and steering can be quantified by the amount of classical noise required to destroy these quantum properties. There are different ways of setting up a precise definition for this idea; here we only introduce the quantifiers which turn out to be naturally compatible with our state-channel duality.

We recall from \cite{CavCSR} that \emph{Consistent Steering Robustness} ${\rm CSR}$ of a state assemblage is given by 
\begin{align}
\nonumber {\rm CSR} (\{\sigma_{a|x}\})= \inf &\Big\{t\geq 0 \,\Big|\, \{\pi_{a|x}\}\,\text{ $\sigma$-consistent},\\
& \left\{\frac{\sigma_{a|x} + t \pi_{a|x}}{1+ t}\right\}\,\text{ unsteerable } \Big\},
\end{align}
where $\sigma$-consistence means $\sum_a \sigma_{a|x} = \sum_a \tau_{a|x}$ for all $x$. Similarly, the \emph{Incompatibility Robustness} ${\rm IR}$ \cite{UoBu15} of a measurement assemblage is given by 
\begin{align}\label{eq:IR}
\nonumber {\rm IR}(\{M_{a|x}\}) = \inf \Big\{t\geq 0 \,\Big|\, & \frac{M_{a|x} + t N_{a|x}}{1+ t} \\ &\text{ jointly measurable } \Big\}.
\end{align}
We stress that these definitions, although typically interpreted as SDPs in the finite-dimensional case, can also be stated in infinite dimensions with possibly continuous outcomes for the measurements. We note that in such a case they can only be formulated as SDPs by first restricting to a subspace and discretising the outcomes, as in our numerical example in Section \ref{subsec:noon}.

\section{Main result: state-channel correspondence and steering}\label{sec:main}

Our key idea for attacking steering problems is a state-channel duality valid in infinite dimensions. It goes beyond the familiar Choi-Jamio\l{}kowski (CJ) isomorphism, which maps channels $\sch:\mathcal{L}(\mathcal{H}_B)\rightarrow \mathcal{L}(\mathcal{H}_A)$ into states ${\rho = (\sch\otimes {\rm Id})(\ketbra{\Omega_0})}$ on $\mathcal{H}_A\otimes \mathcal{H}_B$, where $\ket{\Omega_0}=\frac{1}{\sqrt{d}}\sum_n \ket{nn}$ is the maximally entangled state on $\mathcal{H}_B\otimes\mathcal{H}_B$ and ${\rm dim}\mathcal{H}_B=d<\infty$. The CJ isomorphism is a one-to-one map between channels and states $\rho$ with completely mixed $\mathcal{H}_B$ marginals, i.e. $\sigma=\tr_A [\rho]=\openone/d$. It has been used in the definition of channel steering \cite{PianiCS} and the verification of the quantumness of a channel \cite{Pu15}. Our extension is as follows.
\begin{lemma}\label{l:channel}
There is a 1-to-1 correspondence between bipartite states $\rho$ sharing a full-rank marginal $\sigma={\rm tr}_A[\rho]$, and quantum channels $\sch$ from Bob to Alice, such that
\begin{equation}\label{eq:st_ch}
\rho = (\sch \otimes {\rm Id})(|\Omega_\sigma\rangle\langle \Omega_\sigma|)
\end{equation}
where ${|\Omega_\sigma\rangle  := \sum_{n=1}^{d} \sqrt{s_n} |nn\rangle}\in \mathcal H_B\otimes \mathcal H_B$ is defined as the purification of $\sigma = \sum_n s_n \ketbra{n}$.
\end{lemma}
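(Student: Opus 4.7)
The plan is to prove each direction separately and then verify uniqueness, using the essential uniqueness of purifications as the main engine. The forward implication is immediate: given any channel $\sch$, the operator $\rho:=(\sch\otimes\mathrm{Id})(\ketbra{\Omega_\sigma})$ is positive with unit trace because $\sch\otimes\mathrm{Id}$ is CPTP, and tracing out $A$ while using trace preservation of $\sch$ on the first factor yields
\begin{equation*}
\tr_A[\rho]=\tr_1[\ketbra{\Omega_\sigma}]=\sigma,
\end{equation*}
the last equality being the defining property of $|\Omega_\sigma\rangle$ as a purification of $\sigma$.

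For the backward direction, given $\rho$ with full-rank marginal $\sigma$, I would pick any purification $|\Psi\rangle\in\mathcal H_A\otimes\mathcal H_B\otimes\mathcal H_E$ of $\rho$. Since $\tr_{AE}[\ketbra{\Psi}]=\sigma$, the vector $|\Psi\rangle$ is also a purification of $\sigma$, now with purifying system $\mathcal H_A\otimes\mathcal H_E$, while $|\Omega_\sigma\rangle$ is another purification of $\sigma$ with purifying system $\mathcal H_B$. Full rank of $\sigma$ upgrades the usual partial-isometry statement of the purification uniqueness theorem to a genuine isometry $V:\mathcal H_B\to\mathcal H_A\otimes\mathcal H_E$ such that $|\Psi\rangle=(V\otimes\id_B)|\Omega_\sigma\rangle$. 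The Stinespring prescription $\sch(X):=\tr_E[VXV^*]$ then defines a CPTP map from $\mathcal L(\mathcal H_B)$ to $\mathcal L(\mathcal H_A)$ (trace preservation uses precisely $V^*V=\id$, which requires the isometry to be total), and by construction $(\sch\otimes\mathrm{Id})(\ketbra{\Omega_\sigma})=\tr_E[\ketbra{\Psi}]=\rho$.

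Uniqueness follows from the full Schmidt rank of $|\Omega_\sigma\rangle$. If two channels realise the same $\rho$, their difference $\Phi:=\sch_1-\sch_2$ satisfies $(\Phi\otimes\mathrm{Id})(\ketbra{\Omega_\sigma})=0$. Expanding in the eigenbasis of $\sigma$ gives $\sum_{n,m}\sqrt{s_n s_m}\,\Phi(\ket n\bra m)\otimes\ket n\bra m=0$, and isolating the $(j,l)$ block on Bob's side forces $\sqrt{s_j s_l}\,\Phi(\ket j\bra l)=0$ for every $j,l$. Since all $s_n>0$, $\Phi$ vanishes on every rank-one operator, hence on the whole trace class by boundedness of $\sch_1,\sch_2$.

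The hard part will be the infinite-dimensional bookkeeping. The purification $|\Omega_\sigma\rangle=\sum_n\sqrt{s_n}|nn\rangle$ converges in norm since $\sum s_n<\infty$, but the isometry $V$ is only naturally defined on the subspace reached by the Schmidt vectors and must be extended to all of $\mathcal H_B$ by closure—it is precisely full rank of $\sigma$ that makes this extension canonical and produces an isometry rather than a partial one. Likewise, one must check that the Stinespring form $X\mapsto\tr_E[VXV^*]$ extends continuously from finite-rank operators to the full trace class, which is exactly the type of density argument already alluded to in Sec.~\ref{subsec:prel_hidden} for the map $G\mapsto\sigma^{1/2}G\sigma^{1/2}$, and which I expect to be handled in Appendix~\ref{Aps:sc-dual}.
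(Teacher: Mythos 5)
Your proof is correct, and it reaches the lemma by a genuinely different route from the paper's. The paper works entirely in Kraus/Heisenberg coordinates: it first derives the identity $\sigma^{\frac12}\sch^*(A)\sigma^{\frac12}={\rm tr}_A[\rho(A\otimes\id)]^\intercal$ of Eq.~\eqref{eq:ch_st}, reads off injectivity from it, and for surjectivity decomposes $\rho=\sum_k\ketbra{\psi_k}$, forms Hilbert--Schmidt operators $R_k$ with $\sum_k R_k^*R_k=\sigma$, and shows via the bound $\sum_k\|R_k\sigma^{-\frac12}\psi\|^2=\|\psi\|^2$ that each $R_k\sigma^{-\frac12}$ extends from ${\rm ran}\,\sigma^{\frac12}$ to a bounded Kraus operator $M_k$ with $\sum_k M_k^*M_k=\id$. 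You instead invoke the essential uniqueness of purifications --- $|\Psi\rangle$ and $|\Omega_\sigma\rangle$ both purify $\sigma$, full rank makes the connecting partial isometry $V$ total, and $\sch=\tr_E[V\,\cdot\,V^*]$ is the resulting Stinespring form --- and your uniqueness step (isolating the $(j,l)$ block and using $s_js_l>0$) is a valid, slightly more explicit substitute for the paper's one-line observation that the matrix elements of $\rho$ determine those of $\sch^*$. The two constructions are secretly the same: the paper's norm identity is precisely the coordinate form of $V^*V=\id$, with $M_k=(\id_A\otimes\langle k|_E)V$, so your packaging simply outsources the one delicate infinite-dimensional step (extending an operator defined on the dense range of $\sigma^{\frac12}$) to a standard theorem, making the argument shorter and more conceptual. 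What the paper's version buys in exchange is the explicit working formula \eqref{eq:ch_st} for $\sch^*$, which is the workhorse behind Theorem~\ref{th:asmb} and the computation of steering-equivalent observables in the applications; if you adopt the purification route you should still record that identity separately, since the rest of the paper leans on it.
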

We postpone the detailed proof to Appendix \ref{Aps:sc-dual}. However, since one aim of the paper is to pay due attention to the technicalities related to the infinite-dimensional case, we briefly sketch the relevant points here in the main text: Given a channel $\sch$, $\rho$ is clearly a valid state with $\tr_A[\rho]=\sigma$. Viceversa, given $\rho$ with marginal $\sigma$, the idea is to find a channel $\sch$ such that
\begin{equation}\label{eq:ch_st}
\sigma^{\frac{1}{2}} \sch^*(A) \sigma^{\frac{1}{2}}= \tr_A[\rho (A\otimes \openone)]^{\intercal},
\end{equation}
where the transpose is taken w.r.t. the basis $\{|n\rangle\}$. Eq. \eqref{eq:ch_st} can then be seen to be equivalent to \eqref{eq:st_ch} by direct computation. In order to find $\sch$, one can invert $\sigma^{\frac{1}{2}}$ and solve for $\sch^*(A)$ provided that $d < \infty$. For $d=\infty$, one cannot directly invert $\sigma^{\frac{1}{2}}$, since it will be an unbounded operator. However, one can still construct the Kraus operators $\{M_k\}_k$ for the channel $\sch^*$ from the Kraus operators $R_k$ of $\sigma^{\frac{1}{2}} \sch^*(\cdot) \sigma^{\frac{1}{2}}$, obtained via Eq.~\eqref{eq:ch_st}. This is achieved by extending $R_k\sigma^{-\frac{1}{2}}$ to a bounded operator on $\mathcal{H}_B$; see Appendix \ref{Aps:sc-dual}.

Using the Lemma, we can prove the equivalence between steering of a state assemblage and incompatibility of a measurement assemblage \cite{UoBu15} in full generality and from a quantitative perspective \cite{ChBu16,CavCSR}:
\begin{theorem}\label{th:asmb} The state assemblage $\{\sigma_x(X)\}_{X,x}$ defined by $\rho$ and $\{A_x\}_x$ is steerable $\Leftrightarrow$ the measurement assemblage $\{\sch^*(A_x)\}_x$ is incompatible. Here $\sch\leftrightarrow \rho$ via Lemma \ref{l:channel}, with $\sigma = {\rm tr}_A[\rho]=\sigma_x(\mathcal A_x)$. 
This correspondence is quantitative in that the incompatibility robustness (IR) of $\{\sch^*(A_x)\}_x$ coincides with the consistent steering robustness (CSR) of $\{\sigma_x(X)\}_{X,x}$.
\end{theorem}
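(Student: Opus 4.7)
The plan is to combine two correspondences: the state--channel duality \eqref{eq:ch_st} from Lemma~\ref{l:channel}, which moves $\rho$ to a channel $\sch$ from Bob to Alice, and the POVM / hidden-state correspondence \eqref{grep} for operators with fixed marginal $\sigma$. Together these provide a dictionary in which the data defining a LHS model (hidden states $\sigma_\lambda$ plus a post-processing kernel $D(X|x,\lambda)$) translate, term by term, into the data defining a joint measurement (POVM $G_\lambda$ plus the same $D$), with the roles of $\sigma_\lambda$ and $G_\lambda$ interchanged by the transpose.

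For the qualitative direction I would substitute $A = A_{X|x}$ into \eqref{eq:ch_st} to obtain
\begin{equation}
\sigma_x(X)^{\intercal} \;=\; \sigma^{\frac12}\,\sch^{*}(A_{X|x})\,\sigma^{\frac12}.
\end{equation}
Suppose now $\{\sch^{*}(A_{a|x})\}$ is jointly measurable by a POVM $\{G_\lambda\}$ with kernel $D$, so that $\sch^{*}(A_{X|x}) = \int D(X|x,\lambda) G_\lambda\,d\lambda$. Sandwiching by $\sigma^{1/2}$ and transposing (with $\sigma^{\intercal} = \sigma$ in its eigenbasis) yields the LHS model $\sigma_\lambda := [\sigma^{1/2} G_\lambda \sigma^{1/2}]^{\intercal}$ for $\{\sigma_x(X)\}$ with the same kernel, and normalisation $\int \sigma_\lambda\,d\lambda = \sigma$ follows from $\int G_\lambda\,d\lambda = \openone$. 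Conversely, given any LHS $\{\sigma_\lambda\}$ for $\{\sigma_x(X)\}$, the correspondence \eqref{grep} supplies a POVM $\{G_\lambda\}$ with $\sigma^{1/2} G_\lambda \sigma^{1/2} = \sigma_\lambda^{\intercal}$, and the same manipulation shows it is a joint measurement of $\{\sch^{*}(A_{a|x})\}$.

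For the quantitative statement I would observe that \eqref{grep} extends to a bijection between $\sigma$-consistent noise assemblages $\{\pi_{a|x}\}$ (i.e.\ $\sum_a \pi_{a|x} = \sigma$) and POVM noise assemblages $\{N_{a|x}\}$ on $\mathcal{H}_B$ via $\sigma^{1/2} N_{a|x} \sigma^{1/2} = \pi_{a|x}^{\intercal}$, and that convex mixtures commute with this map,
\begin{equation}
\left[\frac{\sigma_{a|x} + t\,\pi_{a|x}}{1+t}\right]^{\intercal} = \sigma^{\frac12}\,\frac{\sch^{*}(A_{a|x}) + t\,N_{a|x}}{1+t}\,\sigma^{\frac12}.
\end{equation}
Applying the qualitative part to the mixture, the noisy state assemblage is unsteerable iff the noisy measurement assemblage is jointly measurable, so the infima defining CSR and IR coincide.

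The main obstacle lies in the continuous-outcome, infinite-dimensional regime, where \eqref{grep} must be read in the integrated sense $\int_X \sigma_\lambda\,d\lambda = \sigma^{1/2}\bigl(\int_X G_\lambda\,d\lambda\bigr)\sigma^{1/2}$, and the existence of trace-class densities $\sigma_\lambda$ rests on the Radon--Nikodym property invoked in Section~\ref{subsec:prel_hidden}. Verifying measurability of the hidden-state map in trace norm, preservation of weak Markov kernels under the bijection, and positivity and normalisation of the noise POVMs under transposition are then routine, following the template already set by Lemma~\ref{l:channel} and the discussion around Eq.~\eqref{grep}.
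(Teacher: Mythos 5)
Your proposal is correct and follows essentially the same route as the paper: sandwich with $\sigma^{1/2}$ to translate between joint-measurement decompositions of $\{\sch^*(A_{a|x})\}$ and LHS decompositions of the (transposed) assemblage via \eqref{grep}, then extend the same bijection to the noise terms $N_{a|x}\leftrightarrow\pi_{a|x}$ so that the IR and CSR infima coincide, with the infinite-dimensional technicalities deferred to the Radon--Nikodym argument of Section~\ref{subsec:prel_hidden}. The only difference is cosmetic (you place the transpose on $\sigma_\lambda$ rather than on the assemblage), so no further comment is needed.
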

\begin{proof}
Using Lemma \ref{l:channel} with any fixed state $\sigma$, we have the correspondences
\begin{align}
\nonumber \{\sch^* (A_{a|x})\}& \mapsto \{ \rho_{a|x} \}, \\
\sch &\mapsto \rho = (\sch \otimes {\rm Id})(|\Omega_\sigma\rangle\langle \Omega_\sigma|),
\end{align}
between the measurement assemblage $A_{a|x}$ transformed, via the Heisenberg-picture channel $\sch^*$ and the steering assemblage obtained via measurements $A_{a|x}$ on the state $\rho$. Note that the measurements $\{ A_{a|x}\}$ stay fixed. Now, $\{\sch^* (A_{a|x})\}$ is jointly measurable if and only if
\begin{equation}
\sch^*(A_{a|x}) = \sum_\lambda D(a|x, \lambda) G_\lambda.
\end{equation}
By multiplying this with $\sigma^{\frac 12}$ on both sides, we obtain
\begin{equation}
\rho_{a|x}^\intercal = \sum_\lambda D(a|x, \lambda) \sigma_\lambda,
\end{equation}
where the hidden states $\sigma_\lambda$ correspond to $G_\lambda$ via \eqref{grep}, and $\rho_{a|x}^\intercal:=\sigma^{\frac 12}\sch^*(A_{a|x})\sigma^{\frac 12}= {\rm tr}_A[\rho(A_{a|x}\otimes \id)]^\intercal$ is the assemblage. As we have established above, all the correspondences are one-to-one, and hence steerability of the setting $(\rho,\{\A_{a|x}\})$ is equivalent to the incompatibility of $\{\sch^* (A_{a|x})\}$.

In order to prove the equivalence of the quantifiers, we follow a similar reasoning as the one in Ref.~\cite{ChBu16}:  We need to prove that for each noise term $N_{a|x}$ of the IR problem, i.e., a term making the measurement assemblage jointly measurable for a given $t$, we can find a noise term $\pi_{a|x}$ of the CSR problem, i.e., a term making the state assemblage unsteerable for the same $t$, and viceversa. We use again the relation
\begin{equation}\label{eq:noise_t}
 \pi_{a|x}^\sch   = \sigma^{\frac 12} N_{a|x} \sigma^{\frac 12}
\end{equation}
to obtain a a one-to-one mapping between $\sigma$-consistent assemblages and arbitrary POVMs. In the finite-dimensional case, we can argue as follows: Given   a $\sigma$-consistent assemblage $\{ \pi_{a|x}\}_{a,x}$,  $\{ N_{a|x}\}_{a,x}$ defined as in Eq.~\eqref{eq:noise_t} is a valid measurement assemblage. Viceversa, given $\{ N_{a|x}\}_{a,x}$ a valid measurement assemblage, we can construct the $\sigma$-consistent assemblage $\{ \pi_{a|x}\}_{a,x}$ as
\begin{equation}
\pi_{a|x}^\sch  = \tr_A\left[ N_{a|x}\otimes \id \ketbra{\Omega_\sigma}\right] = \sigma^{\frac 12} N_{a|x} \sigma^{\frac 12},
\end{equation}
where $\ket{\Omega_\sigma} := \sum_n \sqrt{s_n} \ket{nn}$ is the purification of $\sigma := \sum_n s_n \ketbra{n}$. Hence 
${\rm CSR} (\{\sigma_{a|x}\}) = {\rm IR} (\{\sch^*(A_{a|x})\})$. When the Hilbert space is infinite-dimensional, with possibly continuous outcomes for the POVMs, we again need the same argument as in Section \ref{subsec:prel_hidden}, since $N_{a|x}$ may not be a proper operator, while we need $\pi_{a|x}$ to actually be in the trace class. This establishes the correspondence \eqref{eq:noise_t} between POVMs and $\sigma$-consistent assemblages in the same way as we obtained \eqref{grep}. Then the equality ${\rm CSR} (\{\sigma_{a|x}\}) = {\rm IR} (\{\sch^*(A_{a|x})\})$ clearly follows, and so we can extend the equivalence of quantifiers to the infinite-dimensional case.
\end{proof}

We remark that the above reasoning also provides the connection with the steering equivalent observables defined in the introduction. Given a state assemblage $\{\rho_{a|x}\}_{a,x}$, with a full rank reduced state $\sigma := \sum_a \rho_{a|x}$, its steering equivalent (SE) observables \cite{UoBu15} are given by
\begin{equation}
B_{a|x} := \sigma^{-1/2} \rho_{a|x} \sigma^{-1/2} =\sch^*_\rho(A_{a|x}).
\end{equation}
We stress that we have here used Thm.~\ref{th:asmb} above to make a connection between the notion in \cite{UoBu15} and the one given in the introduction in terms of channels. In particular, this extends the former notion to the infinite-dimensional case.

Furthermore, it is easy to show that if we have only access to the assemblage $\{\rho_{a|x}\}_{a,x}$, and not to the bipartite state $\rho$, we can always interpret $B_{a|x}$ as the observables giving the assemblage when measured on the purification $\ket{\Omega_\sigma} := \sum_n \sqrt{s_n} \ket{nn}$ of $\sigma := \sum_n s_n \ketbra{n}$. Namely,
\begin{equation}
\sigma^{1/2} B_{a|x} \sigma^{1/2} = \tr_A\left[ B_{a|x} \otimes \id \ketbra{\Omega_\sigma}\right]^\sch.
\end{equation}
From Thm.~\ref{th:asmb} we know that $\{\rho_{a|x}\}_{a,x}$ is unsteerable $\Leftrightarrow$ $\{B_{a|x}\}_{a,x}$ is jointly measurable. If we compare that with the definition of the channel $\sch_\rho$, we find that
%\begin{equation}\label{eq:fix}
$\sch^*_{_{\ketbra{\Omega_\sigma}}} (B_{a|x}) = B_{a|x}$. Hence, the observables $B_{a|x}=\sch^*(A_{a|x})$, when measured on $\ket{\Omega_\sigma}$, reproduce the state assemblage $\{\sigma_{a|x}\}_{a,x}$.
%\end{equation}
%On the other hand, given a bipartite state $\rho$ and a measurement assemblage $\{A_{a|x}\}_{a,x}$, we have that
%%\begin{equation}\label{eq:se}
%$\sch^*_\rho(A_{a|x}) = B_{a|x}$.
%%\end{equation}
%Hence, Then, Lemma~1 generalizes this interpretation to the case $\sigma$ full rank reduced state on an infinite dimensional space.
We record this conclusion, along with some other direct implications of Thm.~\ref{th:asmb}, into the following Corollary, which generalises several existing results.
\begin{corollary}\label{p:stch} (a) Two states $\rho_1,\rho_2$ are steering-equivalent if the corresponding channels of Lemma~\ref{l:channel} have $\sch_1^*(\cdot)=U\sch_2^*(\cdot)U^*$ where $U$ is unitary. (b) A pure state $\ket{\Psi}$ of full Schmidt rank is steerable by assemblage $\{A_{X|x}\}_{X,x}$ iff the latter is incompatible. (c) A state $\rho$ is steerable by measurements $\{A_{X|x}\}$ iff the purification $\ket{\Omega_\sigma}$ of Lemma~\ref{l:channel} is steerable by the steering-equivalent measurements $\{\sch^*(A_{X|x})\}$. (d) A state $\rho$ is unsteerable iff the channel $\sch^*$ is incompatibility breaking.
\end{corollary}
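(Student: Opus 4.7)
My plan is to derive all four parts as direct consequences of Theorem~\ref{th:asmb}, combined with the elementary fact that joint measurability of a POVM assemblage is preserved under unitary conjugation $M_{a|x}\mapsto UM_{a|x}U^*$ (a joint POVM $\{G_\lambda\}_\lambda$ with postprocessing $D(a|x,\lambda)$ is carried to $\{UG_\lambda U^*\}_\lambda$ with the same $D$). Throughout I will invoke Theorem~\ref{th:asmb} in the form: $\rho$ is steerable by $\{A_{a|x}\}$ iff $\{\sch^*(A_{a|x})\}$ is incompatible, where $\sch$ is the channel associated to $\rho$ via Lemma~\ref{l:channel}.

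Parts (a) and (d) are then essentially immediate. For (a), the hypothesis $\sch_1^*(\cdot)=U\sch_2^*(\cdot)U^*$ together with the unitary invariance above forces $\{\sch_1^*(A_{a|x})\}$ and $\{\sch_2^*(A_{a|x})\}$ to be jointly measurable for exactly the same input assemblages $\{A_{a|x}\}$; applying Theorem~\ref{th:asmb} on each side yields steering-equivalence of $\rho_1$ and $\rho_2$. For (d), reading ``unsteerable'' as unsteerable by every assemblage and applying Theorem~\ref{th:asmb} pointwise gives: $\rho$ is unsteerable iff $\{\sch^*(A_{a|x})\}$ is jointly measurable for every input $\{A_{a|x}\}$, which is precisely the definition of $\sch$ being incompatibility-breaking.

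For part (b), I would write the full-Schmidt-rank state as $\ket{\Psi}=\sum_n \sqrt{s_n}\,\ket{e_n}\otimes\ket{n}$, so that Bob's marginal $\sigma=\sum_n s_n\ketbra{n}$ is full-rank. A short computation from Eq.~\eqref{eq:ch_st} gives $\tr_A[\ketbra{\Psi}(A\otimes\id)] = \sigma^{1/2}(V^*AV)^\intercal \sigma^{1/2}$ with $V\ket{n}=\ket{e_n}$, and after the transpose in Eq.~\eqref{eq:ch_st} this identifies the dual channel as the unitary conjugation $\sch^*(A)=V^*AV$. Consequently $\{\sch^*(A_{a|x})\}$ is jointly measurable iff $\{A_{a|x}\}$ is, and Theorem~\ref{th:asmb} gives (b). Part (c) then follows by applying (b) to the purification $\ket{\Omega_\sigma}$, which has full Schmidt rank because $\sigma$ is full-rank by Lemma~\ref{l:channel}: this reads $\ket{\Omega_\sigma}$ steerable by $\{\sch^*(A_{a|x})\}$ iff the latter is incompatible, and chaining with Theorem~\ref{th:asmb} applied to $\rho$ itself produces the biconditional in (c).

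The only real subtlety I anticipate is in (b): extracting $\sch^*(A)=V^*AV$ from $\sigma^{1/2}\sch^*(A)\sigma^{1/2}=\sigma^{1/2}V^*AV\sigma^{1/2}$ requires cancelling $\sigma^{1/2}$, which is not literally invertible when $d=\infty$. This is handled by precisely the bounded-extension argument sketched after Lemma~\ref{l:channel}. Once that is granted, the remaining content of the corollary is essentially bookkeeping on top of Theorem~\ref{th:asmb}.
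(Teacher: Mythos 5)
Your proposal is correct and follows essentially the same route as the paper: (a) and (d) are read off from Theorem~\ref{th:asmb} plus unitary invariance of joint measurability, and your Schmidt-decomposition computation for (b) reproduces the paper's identification $\sch^*(A)=U^*AU$ with $U=R\sigma^{-1/2}$ (your $V$ is exactly this $U$), with the infinite-dimensional cancellation of $\sigma^{1/2}$ handled by the same extension/uniqueness argument from Lemma~\ref{l:channel}. Your derivation of (c) as the special case of (b) applied to $\ket{\Omega_\sigma}$ is just a repackaging of the paper's observation that the purification's channel acts as the identity on the steering-equivalent observables, so nothing is missing.
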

\begin{proof}
Part (a) follows directly from Thm.~\ref{th:asmb} and the fact that incompatibility is preserved in unitary operations. We demonstrate the use of (a) with NOON-states below. Part (b) is the infinite-dimensional version of the result in Refs.~\cite{UoMoGu14, QuVeBr14}, and can be obtained by defining a Hilbert-Schmidt operator $R$ with $\langle n|R|m\rangle = \langle nm|\Psi\rangle$, where the basis on Bob's side is chosen as in Lemma \ref{l:channel}. Since $R$ and $R^*$ have full rank, $U = R\sigma^{-\frac 12}$ is unitary and 
$\ket{\Psi} = (U\otimes\id)|\Omega_\sigma\rangle$, so that $\sch^*(A) =U^*AU$ and hence preserves incompatibility. Part (c) was proved above, while (d) is a direct consequence of Thm.~\ref{th:asmb} on the theory of incompatibility breaking channels.\end{proof}

We stress the difference with respect to Ref.~\cite{ibc}, where the incompatibility breaking property of a given quantum channel was related to the unsteerability property of specific bipartite states derived from it. Here we have devised a way (via the above state-channel duality) to do the converse: for any given state $\rho$ we can find a quantum channel $\sch$ which is incompatibility breaking exactly when the state is steerable. This allows us to treat \emph{any} given steering problem as an IBC problem, thereby opening up new possibilities for investigating steering. In the following section we illustrate this with different applications.

\section{Applications}\label{sec:appl}

\subsection{Separable and pure states} Consider first separable states $\rho = \sum_i p_i \rho_A^i\otimes \rho_B^i$, which are of course not steerable. We easily find the channel of Lemma~\ref{l:channel} as
$\sch^*(A) = \sum_i {\rm tr}[\rho_A^iA] F_i$, where $F_i = p_i\sigma^{-\frac 12}(\rho_B^i)^\intercal\sigma^{-\frac 12}$ satisfies $0\leq F_i \leq \id$ and $\sum_i F_i =\id$, that is, $\sch$ is entanglement breaking \cite{EBC}.

At the other extreme, pure states of full Schmidt rank correspond to unitary channels by Cor.~\ref{p:stch} (b). As an infinite-dimensional example, the channel for the two-mode coherent state $|z\rangle$ with $z=re^{i\theta}$ is the phase shift $\sch^*(A) = e^{i\theta a^\dagger a}Ae^{-i\theta a^\dagger a}$ if we identify the photon number bases of Alice and Bob. Importantly, the problem of non-unique regularisation of maximally entangled states in $d=\infty$ is circumvented by our method.
 
\subsection{Noisy NOON-states}\label{subsec:noon} Consider the ``NOON-state" $${|N00N\rangle=\frac {1}{\sqrt 2}(|0N\rangle-e^{iN \alpha}|N0\rangle)}$$ shared by Alice and Bob \cite{Le02},  with $\{ \ket{n} \}$ photon number basis of 1-mode electromagnetic field. Via random photon loss, the state becomes
$${\rho_\eta = \eta |N00N\rangle\langle N00N| +(1-\eta) |00\rangle\langle 00|},$$ which is unsteerable for $\eta=0$ and steerable for $\eta=1$. Hence, there is a threshold $\eta_c$ (depending on the allowed measurements) such that $\rho_\eta$ is steerable iff $\eta > \eta_c$ (cf.\ \cite{KoSk15} and the references therein, for previous results on the problem). Using Lemma~\ref{l:channel} we find the channel of $\rho_\eta$ as
\begin{eqnarray}
\nonumber \sch^*(A) &=& \sigma^{-\frac{1}{2}}\tr_A[\rho (A\otimes \openone)]^{\intercal}\sigma^{-\frac{1}{2}} \\
\nonumber &=&
 \begin{pmatrix}
r^2 A_{NN}+(1-r^2)A_{00} & -rA_{N0}e^{-iN\alpha} \\
 -rA_{0N}e^{+iN\alpha} & A_{00} \\
 \end{pmatrix} \\
 &=& U^*\ad_r^*(A)U \label{eq:AD}
\end{eqnarray}
where $r = \sqrt{\eta/(2-\eta)}$, $\sigma={\rm tr}_A(\rho)=(1-\eta/2)|0\rangle\langle 0|+\eta/2|N\rangle\langle N|$,
\begin{equation}\label{eq:damp_matrix}
\ad_r^*(A) = \sum_{i=0}^1 K_{i,r}^*AK_{i,r}=\begin{pmatrix}
A_{00} & rA_{0N} \\
rA_{N0} & r^2 A_{NN}+(1-r^2)A_{00} \\
 \end{pmatrix}
\end{equation}
is the \emph{amplitude damping channel} \cite{NH} with Kraus operators
\begin{equation}\label{eq:damp_kraus}
K_{0,r} = \begin{pmatrix}1 & 0\\ 0 & r\end{pmatrix},  \;  K_{1,r} = \begin{pmatrix}0 & \sqrt{1-r^2}\\ 0 & 0\end{pmatrix},
\end{equation}
and
\begin{equation}
U := |0\rangle\langle N| -e^{iN\alpha} |N\rangle\langle 0| = \begin{pmatrix}
0 & 1 \\
-e^{iN\alpha} & 0\\
\end{pmatrix} 
\end{equation}
is a unitary matrix. By Cor.~\ref{p:stch} (a) , the unitary is irrelevant for steering, and we will ignore it in what follows.

The problem, then, reduces to the question of how $\ad_r$ breaks incompatibility. We introduce the following necessary criterion for this:
\begin{lemma}\label{l:jmad}
Let $\{A_x\}_{x=1}^n$ be any finite assemblage of qubit measurements (with arbitrary outcome sets $\mathcal A_x$). Then the ``damped measurements'' $\ad^*_r(A_x)$ are jointly measurable if 
\begin{equation}\label{eq:criterion}
\sum_{x=1}^n \det \frac{\ad^*_r(A_{X_x|x})}{\langle 0|A_{X_x|x}|0\rangle}\geq n-1 \text{ for each } X_x\subset \mathcal{A}_x.
\end{equation}
\end{lemma}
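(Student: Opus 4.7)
The plan is to construct an explicit joint POVM on the product outcome space $\mathcal A_1\times\cdots\times\mathcal A_n$ whose marginals reproduce the damped effects $\ad_r^*(A_{a|x})$. The entire computation stays inside the qubit subspace $\mathrm{span}\{|0\rangle,|N\rangle\}$ on which the $A_{a|x}$ act, which makes everything a $2\times 2$ matrix problem.

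First, I would record the basic identity
\[
\det \ad_r^*(A) \;=\; r^2\det A + (1-r^2)\,\langle 0|A|0\rangle^2,
\]
whose immediate consequence is that the normalised effect $\tilde A_{a|x}:=\ad_r^*(A_{a|x})/\langle 0|A_{a|x}|0\rangle$ has determinant at least $1-r^2$. This "noise margin" built into $\tilde A_{a|x}$ is the source of joint measurability; the hypothesis of the Lemma quantifies exactly how much of this margin must be present to support a joint POVM for $n$ measurements.

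Second, I would try the product-type ansatz
\[
G(a_1,\ldots,a_n) \;=\; f(a_1,\ldots,a_n)\,|0\rangle\langle 0| \;+\; \sum_{x=1}^n \Big(\prod_{j\ne x}\langle 0|A_{a_j|j}|0\rangle\Big)\,R_{a_x|x},
\]
where $R_{a_x|x}$ is a suitable correction to $\ad_r^*(A_{a_x|x})$ (involving the $|0\rangle\langle 0|$ and $|N\rangle\langle N|$ components) and $f$ is a scalar "background". The choice of product weights $\prod_{j\ne x}\langle 0|A_{a_j|j}|0\rangle$ is natural because $\int \langle 0|A_{a|x}|0\rangle\,da=1$ and $\int \ad_r^*(A_{a|x})\,da=\id$, so the marginalisation conditions $\int G\prod_{j\ne x_0}da_j = \ad_r^*(A_{a_{x_0}|x_0})$ and the normalisation $\int G\prod_j da_j=\id$ become linear equations in $f$ and the free scalar parameters of $R_{a_x|x}$, which I would solve explicitly.

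Third, I would verify positivity of $G(a_1,\ldots,a_n)$ by a direct $2\times 2$ computation in the $\{|0\rangle,|N\rangle\}$ basis. The $(0,0)$-entry is manifestly non-negative, so positivity of $G$ reduces to non-negativity of its determinant. After cancellation of the common factor $\prod_j\langle 0|A_{a_j|j}|0\rangle^2$, the determinant of $G$ should take the form
\[
\sum_{x=1}^n \det\frac{\ad_r^*(A_{a_x|x})}{\langle 0|A_{a_x|x}|0\rangle}\;-\;(n-1),
\]
which is exactly the hypothesis at singleton subsets $X_x=\{a_x\}$. The extension to arbitrary measurable $X_x\subset\mathcal A_x$ follows from linearity of $\ad_r^*$ and the integral definition $\ad_r^*(A_{X_x|x})=\int_{X_x}\ad_r^*(A_{a_x|x})\,da_x$.

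The main obstacle is getting the ansatz right: the off-diagonal entries $r\langle 0|A_{a_x|x}|N\rangle$ of the damped effects contribute a cross-term $|\!\sum_x\cdot|^2$ to the determinant of $G$, which, by Cauchy--Schwarz, is generally larger than the sum $\sum_x|\cdot|^2$ that appears in the hypothesis. Overcoming this will require introducing into $R_{a_x|x}$ an additional diagonal "noise" term proportional to $|N\rangle\langle N|$, whose coefficient is tuned precisely so that the unwanted cross-term cancels and the determinant of $G$ collapses to $\sum\det -(n-1)$. Once the correct coefficients are identified, the rest is routine matrix bookkeeping.
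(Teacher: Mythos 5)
You have the right target and the right reduction: build an explicit joint POVM $G_{a_1,\ldots,a_n}$ with deterministic response functions on the product outcome space, note that its $(0,0)$-entry is non-negative so that positivity reduces to $\det G\geq 0$, and arrange for $\det G$ to equal $\bigl(\sum_x\det\tilde A_{a_x|x}-(n-1)\bigr)\prod_j\langle 0|A_{a_j|j}|0\rangle^2$. Your determinant identity $\det\ad_r^*(A)=r^2\det A+(1-r^2)\langle 0|A|0\rangle^2$ is also correct. The genuine gap is in the ansatz and in your proposed cure for the cross-term problem. Write $p_x(a)=\langle 0|A_{a|x}|0\rangle$ and $f_x(a)=\bigl(\ad_r^*(A_{a|x})\bigr)_{N0}/p_x(a)$. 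If the $(0,0)$-entry of $G$ is $\prod_j p_j(a_j)$ and the $(N,0)$-entry is $\prod_j p_j(a_j)\sum_x f_x(a_x)$, then forcing $\det G$ to collapse as above \emph{determines} the $(N,N)$-entry to be $\prod_j p_j(a_j)\bigl(|\sum_x f_x(a_x)|^2+\sum_x r_x(a_x)-n+1\bigr)$, which contains the genuinely two-variable cross terms $f_x(a_x)\overline{f_y(a_y)}$, $x\neq y$. Your ansatz $G=f\,|0\rangle\langle 0|+\sum_x\prod_{j\neq x}p_j(a_j)R_{a_x|x}$ can only ever place, in the $(N,N)$-entry, a sum of terms each depending nontrivially on a single $a_x$; and the extra $|N\rangle\langle N|$ coefficient you propose to tune inside $R_{a_x|x}$ is likewise a function of $a_x$ alone, so it can neither generate nor cancel $f_x(a_x)\overline{f_y(a_y)}$. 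Nor can you dominate the cross terms (e.g.\ via $|\sum_x f_x|^2\leq n\sum_x|f_x|^2$) without spoiling the marginalisation conditions. As written, your construction cannot simultaneously satisfy the marginal constraints and have determinant $\sum_x\det\tilde A_{a_x|x}-(n-1)$.

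The missing idea is to put the exact square in by hand rather than trying to cancel it: define $G_{a_1,\ldots,a_n}/\prod_j p_j(a_j)$ to be the $2\times 2$ matrix with entries $1$, $\sum_x\overline{f_x(a_x)}$, $\sum_x f_x(a_x)$, and $|\sum_x f_x(a_x)|^2+\sum_x r_x(a_x)-n+1$. Positivity is then immediate from the hypothesis, and the cross terms are harmless for the marginals because normalisation of each $A_x$ forces $\int f_x(a)p_x(a)\,da=0$ (the off-diagonal entry of $\int A_{a|x}\,da=\id$ vanishes), so every product $f_x(a_x)\overline{f_y(a_y)}\prod_j p_j(a_j)$ with $x\neq y$ integrates to zero once any variable other than the retained one is marginalised; together with $\int(|f_x|^2+r_x)p_x\,da=1$ this returns exactly $\ad_r^*(A_{a_x|x})$ as the $x$-th marginal. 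This is precisely the paper's construction. A final minor point: what the construction actually needs is the density-level inequality $\sum_x r_x(a_x)\geq n-1$ almost everywhere, which follows from the stated hypothesis for all $X_x\subset\mathcal A_x$ by shrinking the sets, rather than the other way around as your last step suggests.
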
 
A proof of this result is given in Appendix \ref{Aps:jm-crit}.

Next we proceed to introduce the relevant measurements: we focus on the case of Alice attempting to steer Bob using rotated quadratures $Q_\theta=(e^{i\theta}a^\dagger + e^{-i\theta}a)/\sqrt 2$. They act in the infinite-dimensional Hilbert space, with spectral projections (PVM) 
$Q_{q|\theta} = e^{i\theta a^\dagger a}|q\rangle \langle q|e^{-i\theta a^\dagger a}$. As our state lives in ${\rm span}\, \{|0\rangle,|N\rangle\}$, only the 2x2-matrix 
$(\tilde Q_{q|\theta})_{nm} = \langle n| Q_{q|\theta}|m\rangle = e^{i\theta(n-m)} \langle n|q\rangle\langle q|m\rangle$ with $n,m=0,N$ contributes. Explicitly, this matrix reads
\begin{equation}
\tilde Q_{q|\theta} = \begin{pmatrix} 1 &  e^{-iN\theta}h(q)\\ e^{iN\theta}h(q) & h(q)^2 \end{pmatrix}\frac{e^{-q^2}}{\sqrt\pi}, \quad q\in \mathbb R,
\end{equation}
 where $h(x) := \frac{H_N(x)}{\sqrt{2^N N!}}$ with $H_N(x)$ a Hermite polynomial. Note that indeed $\int_{\mathbb R} \tilde Q_{q|\theta} dq = \id$ and $\tilde Q_{q|\theta}\geq 0$, so this is a valid qubit POVM with continuous outcomes.
 
We assume that Alice only has one pair, i.e.\ an assemblage $\{\tilde Q_{q|0}, \tilde Q_{q|\theta}\}_q$ for fixed $\theta$, and this pair is incompatible (despite the truncation), if $\theta\neq 0,\pi$. Indeed, since these are rank-1 POVMs, they can only be compatible if $\tilde Q_{q|\theta} = \int D(q|q')\tilde Q_{q|0}$ for some classical postprocessing $D(q|q')$ \cite{JP} implying $e^{i\theta}\in \mathbb R$, a contradiction. Hence the pure NOON-state (no damping, $r=1$) is steerable with these measurements.

The next step is then to compute the steering-equivalent (SE) observables by applying the channel $\ad_r$. With $0<r<1$, the SE observables become 
 \begin{equation}
\sch^*_r(\tilde Q_{q|\theta}) = \begin{pmatrix} 
1 & r e^{-iN\theta}h(q) \\
r e^{iN\theta}h(q) & r^2 h(q)^2+1-r^2 \end{pmatrix}
\frac{e^{-q^2}}{\sqrt\pi}.
\end{equation}
The determinant of this kernel matrix is $(1-r^2)\frac{e^{-2q^2}}{\pi}$ so that the joint measurability criterion of 
Lemma \ref{l:jmad} reduces to $r^2\leq 1/2$. From this we conclude that $${r_c \geq 1/\sqrt 2},$$ corresponding to ${\eta_c \geq 2/3}$ (independently of $\theta$ and $N$). The value $\eta_c \approx 2/3$ has previously been obtained numerically \cite{KoSk15} for $N=1$; up to our knowledge, ours is the first fully analytical proof of a lower bound on $\eta_c$.

We also remark that when $\eta\leq 2/3$, Eq.~(\ref{JMLHS}) used in the proof of Lemma \ref{l:jmad} gives an explicit joint observable and hence a local hidden state model preventing steering of $\rho_\eta$ by the two quadrature measurements.

Independently of Ref.~\cite{KoSk15}, we show that our method can provide also upper bounds on $r_c$ for $N=1$. We do this by binarising the POVMs, and recalling that incompatibility of binarisations is sufficient for that of the original POVMs, as coarse-graining is an instance of post-processing. Choosing the split at $q=0$ (i.e.\ Alice only records if $q>0$ or not) gives the POVM with elements $\frac 12(\id \pm {\bf n}\cdot \sigma)$ where ${\bf n}=(2r\sqrt 2 /\pi)(\cos\theta, \sin\theta,0)$. Using an exact criterion \cite{Busch} we conclude that the binarisations are incompatible for $r^2 \geq \pi (1-\sin\theta)/(2 \cos^2\theta)$. Notice that the bound depends on $\theta$; with $\theta = \pi/2$ (orthogonal quadratures) we get $r_c \leq \sqrt\pi /2$, or $\eta_c\leq 2\pi/(4+\pi)\approx 0.88$.

Since the split at $q=0$ is the most incompatible binarisation of quadratures \cite{HeKiRe15}, finer coarse-grainings are needed to get better bounds. By dividing the real line in $N_{\rm int}=2,4,6,8,10,12,14$ parts, we obtain bounds via SDP methods, cf.\ Fig.~\ref{fig:noisebounds} for pairs with varying $\theta$, and Tab.~\ref{tab:mult_int} for larger values of $N_{\rm int}$ with $\theta=\pi/2$. In particular, for $N_{\rm int}=20$ and $\theta=\pi/2$, we obtain the value $\eta_c \leq 0.671$, which is rather close to the lower bound $\eta_c\geq 2/3$.

\begin{figure}[t]
\begin{center}
\includegraphics[scale=0.7]{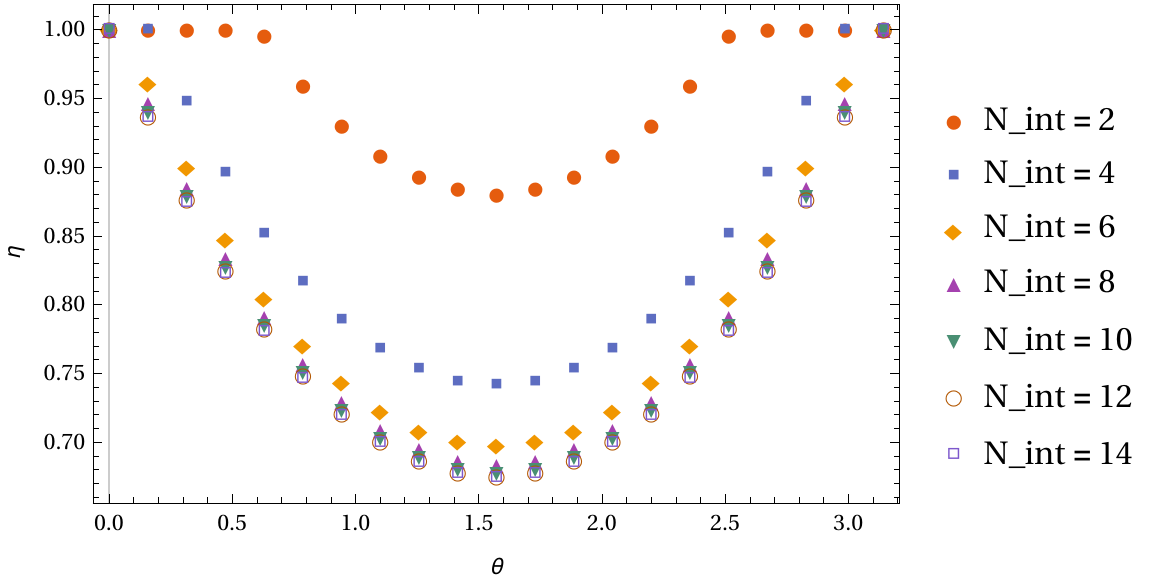}
\end{center}
\caption{\label{fig:noisebounds} (Color online) Critical noise bound for steering through the 1001-state by a coarse-grained pair of quadrature measurements, as a function of the separation angle $\theta$, with coarse-grainings of different number of intervals $N_{\rm int}$. The case $N_{\rm int}=2$ can be reproduced analytically. Below $\eta = 2/3$ the setting is unsteerable by joint measurability criterion \eqref{eq:criterion}.}
\end{figure}

\begin{table}[t!]
\begin{tabular}{|c | c |  c |  c |  c |  c |  c |  c |  c |  c |} \hline
$N_{\rm int}$  & 4 & 6 & 8 & 10 & 12 & 14 & 16 & 18 & 20\\ \hline
$\eta$ &   0.742 & 0.698 & 0.684 & 0.678 & 0.675 & 0.674 & 0.673 & 0.672 & 0.671 \\ \hline
   
\end{tabular}
\caption{Minimal $\eta$ such that the obtained $N_{\rm int}$-valued observables become incompatible.} \label{tab:mult_int}
\end{table}

We obtained these numerical results by implementing the SDP of the Incompatibility Robustness (IR) (see Eq.~\eqref{eq:IR}), searching for the values of $\eta_c$ for which ${\rm IR}>0$. We used the coarse-graining where $\mathbb R$ is divided into the intervals $(-\infty,-c]$, $[-c,-c +c/N_{\rm int}]$, $\ldots$, $[-c/N_{\rm int},0]$, $\ldots$, $[0,c/N_{\rm int}]$, $\ldots$, $[c,\infty)$, where $c\approx1.4$. The corresponding qubit observables were obtained by integrating over the intervals $I_k$, i.e., $Q_{I_k|\theta} = \int_{I_k} \tilde{Q}_{q|\theta} dq$; such integrals can be explicitly written in terms of error functions. 

One can try the same approach in the different subspaces with a higher number of photons. For instance, we investigated the case of $0$ or $6$ photons, which turned out to be more sensitive to noise, e.g., for the case of $N_{\rm int}=16$ one can reach $\eta_{\rm min}=0.89$. If one further increases the number of intervals, the computation becomes too slow and practically impossible.

\subsection{A dynamical example with non-Markovian noise}

We now illustrate how the above steering problem for the NOON state arises from a different context, and how our techniques provide a solution in that case as well.

Consider a setup where physical noise arises on Alice's side due to coupling to a zero temperature heat bath. Starting from the 1001-state, the photon dissipates into the bath on Alice's side via a channel $\mathcal E_t$ given by the amplitude damping master equation \cite{BP02}
$d\mathcal E_t(\rho_0)/dt=\gamma(t)\left[\sigma_{-}\mathcal E_t(\rho_0)\sigma_{+}-\frac{1}{2}\{\sigma_{+}\sigma_{-},\mathcal E_t(\rho_0)\}\right]$
where $\sigma_{+}=|1\rangle\langle 0|$, $\sigma_{-}=|0\rangle\langle 1|$, and $\gamma(t)=-2{\rm Re}\frac{d}{dt}\log G(t)$ with $G(t)$ depending on the bath spectral density. The state at time $t$ is $\rho_t = (\mathcal E_t\otimes {\rm Id})(|1001\rangle\langle 1001|)$ so by \eqref{eq:st_ch}, its channel $\sch=\sch_t$ equals $\mathcal E_t$ up to a unitary. Using the form of $\mathcal E_t$ \cite{AdHe15} we find $\sch_t^* (A) = U^*\ad_{r(t)}^*(A)U$, as in  Eq.~\eqref{eq:AD}, where now $r(t)=|G(t)|$, and $U$ is an irrelevant unitary. Interestingly, in this scenario our state-channel duality connects the steerability problem with the non-Markovian properties of the bath (cf., e.g., \cite{Li11}), previously associated with temporal correlations \cite{ChLa16} and decoherence of incompatibility \cite{AdHe15}.

The result of the preceding subsection can now be directly applied to characterise steering in the heat bath scenario: for any time $t$, the state $\rho_t$ is steerable by $\{Q_{q|0},Q_{q|\pi/2}\}$ iff $r(t)\geq r_c$. For the typical Lorentzian spectral density, $r(t)= e^{-\lambda t/2} |\cosh(w \lambda t /2)+\sinh(w \lambda t /2)/w|$ where $\lambda $ is the linewidth, and $w=\sqrt{1-2u/\lambda}$ with $u$ the coupling strength \cite{AdHe15}. We can then evaluate $r(t)\geq r_c$ with the numerical value $r_c \approx 1/\sqrt{2}$, to get the region of points $(u,t)$ where the state is steerable; cf.\ Fig. \ref{fig:region} and its caption. 

\begin{figure}
\begin{center}
\includegraphics[scale=0.22]{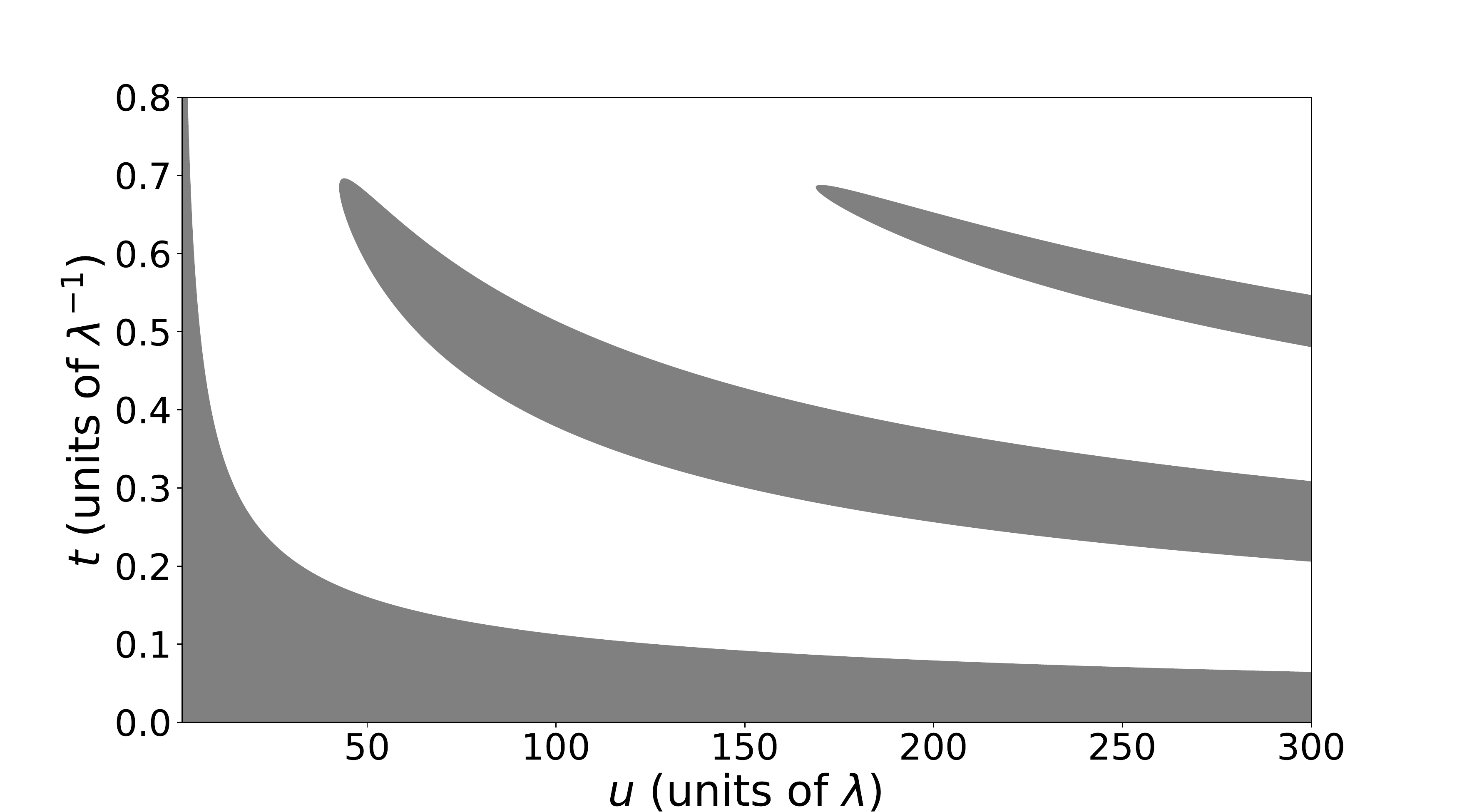}
\end{center}
\caption{\label{fig:region}  (Color online) Steerability region for the dynamical setting (shaded area). The parameter $u$ is the coupling strength (in units of the spectral linewidth $\lambda$ of the bath) and $t$ is time (in units of $\lambda^{-1}$). The two revival regions reflect the non-Markovian character of the evolution in the strong coupling regime, which allows steerability to re-emerge at later times.}
\end{figure}

\section{Gaussian case}\label{sec:gauss}

In this section we establish the correspondence between steering of Gaussian states and incompatibility of Gaussian measurements, via a Gaussian version of our general state-channel duality. In order to do this we first need to establish the required formalism and introduce the notation.

Starting with the basics, an optical system with $N$ modes is a continuous variable (CV) quantum system (see e.g.\ \cite{SiMu94}) with the infinite-dimensional Hilbert space $\mathcal H^{\otimes N}  = \otimes_{j=1}^N L^2(\mathbb R) \simeq L^2(\mathbb R^N)$. The associated phase space is $\mathbb R^{2N}$, with canonical coordinates ${\bf x}=(q_1,p_1,\ldots, q_N,p_N)^T$ in a fixed symplectic basis. The corresponding standard quadrature operators are denoted by $Q_j$ and $P_j$; they satisfy $[Q_i,P_j]=i\delta_{ij}\id$, $[Q_i,Q_j]=[P_i,P_j]=0$  and we set ${\bf R} = (Q_1,P_1,\ldots, Q_N,P_N)^T$, so that $[R_i,R_j]=i\Omega_{ij}\id$ with
${\bf \Omega} = \oplus_{j=1}^N \left( \begin{smallmatrix} 0 & 1 \\ -1 & 0\end{smallmatrix}\right)$. We further denote $$Q_{\bf x}={\bf x}^T{\bf R};$$
these operators are called (generalised) \emph{quadratures}. For a pair of quadratures $(Q_{\bf x},Q_{\bf y})$ the commutator is given by $[Q_{\bf x},P_{\bf y}]=i{\bf x}^T{\bf \Omega}{\bf y}\id$, and any pair for which ${\bf x}^T{\bf \Omega}{\bf y}=1$ is called \emph{canonical}.

The \emph{Weyl operators} $W({\bf x}) = e^{-i Q_{\bf x}}$ satisfy the canonical commutation relation (CCR)
\begin{equation}\label{eqn:commutation}
W({\bf x})W({\bf y}) = e^{-i{\bf x}^T {\bf \Omega y}} W({\bf y})W({\bf x}) \, ,
\end{equation}
and we define displacement operators $D_{{\bf c}}:=W({\bf \Omega}^T{\bf c})$ so that $D_{\bf c}^*W({\bf x})D_{\bf c}=e^{-i{\bf c}^T{\bf x}}W({\bf x})$. A matrix ${\bf S}$ is \emph{symplectic} if ${\bf S}^T{\bf \Omega}{\bf S}={\bf \Omega}$; then by Stone-von Neumann theorem there is a unitary $U_{\bf S}$ with $U_{\bf S}^*W({\bf x})U_{\bf S}=W({\bf Sx})$.

\subsection{Gaussian states, measurements, channels, and postprocessings}

In the following, we first review the characteristic function formalism for Gaussian quantum objects \cite{SiMu94,HoWe01}; see also \cite{KiSc13,gibc}. We then use this to prove the Gaussian version of the state-channel correspondence, after which we proceed to establish the connection between steering and incompatibility.

The characteristic function formalism treats Gaussian state, channels, measurements, and postprocessings in the same footing, is a transparent quantum analogue of the corresponding classical objects by way of a rigorous correspondence theory \cite{We84}, does not require the use of ancillas, circumvents the technical problem of the POVM elements not always being proper operators (see the discussion above), and is especially convenient to use with concatenation, making explicit the idea that a Gaussian channel applied to a Gaussian state (Schrodinger picture) or measurement (Heisenberg picture) produces a new Gaussian state and measurement, respectively. We note that this approach differs from the alternative (equivalent) one introduced by Giedke and Cirac \cite{Giedke02}, on which Wiseman {\it et al.} based their derivation of the Gaussian steering criterion \cite{WiJoDo07}.

A state on a CV system is \emph{Gaussian} if its characteristic function $\hat{\rho}({\bf x}):={\rm tr}[\rho W({\bf x})]$ is a Gaussian function:
\begin{equation}\label{covdef}
\hat{\rho}({\bf x}) = e^{-\frac{1}{4}{\bf x}^T {\bf V}_\rho {\bf x} - i{\bf r}^T {\bf x}}  
\end{equation}
where ${\bf V}_\rho$ is the \emph{covariance matrix} (CM) $[V_\rho]_{ij} = {\rm tr}[\rho\{ R_i - r_i,R_j-r_j \}]$ with displacement vector $r_j={\rm tr}[\rho R_j]$. The CM satisfies the \emph{uncertainty relation}
\begin{equation}\label{UR}
{\bf V}_\rho + i{\bf \Omega }\geq 0.
\end{equation}
Crucially, \emph{every} real and symmetric matrix ${\bf V}$ satisfying \eqref{UR} is a CM of some Gaussian state $\rho$.

%\subsection{Gaussian measurements}

A measurement (POVM) $M_{\bf a}$ with outcomes ${\bf a}\in \mathbb R^{d}$, is \emph{Gaussian} if its outcome distribution for any Gaussian state is a Gaussian (i.e.\ Normal) distribution.
This is the case when the operator-valued characteristic function $\hat M({\bf p}):=\int e^{i{\bf p}^T{\bf a}}M_{\bf a}\,d{\bf a}$ is of the form
\begin{equation}\label{eqn:gaussian_obs}
\hat M({\bf p})= W({\bf Kp})e^{-\frac 14 {\bf p}^T{\bf Lp}- i{\bf m}^T{\bf p}},
\end{equation}
where ${\bf K}$ is an $N\times d$-matrix and ${\bf L}$ is an $d\times d$-matrix satisfying the positivity condition
\begin{equation}\label{possu}
{\bf C}_{\bf K,L} :={\bf L} -i {\bf K}^T{\bf \Omega}{\bf K}\geq 0,
\end{equation}
and ${\bf m}$ is a displacement vector. Importantly, \emph{every} triple $({\bf K},{\bf L}, {\bf m})$ satisfying \eqref{possu} defines a Gaussian measurement.

In the case $d=1$ we have ${\bf K}={\bf x}$, a column vector, while ${\bf L}=2\xi^2$ and ${\bf m}=m$ are just numbers. Since shifts in outcomes are irrelevant for steering, we consider $m=0$ so that the corresponding POVM $M_{a|{\bf x},\xi}$ has characteristic function
$$\hat M_{a|{\bf x},\xi}(p)=e^{-ipQ_{\bf x}} e^{-\frac 12 p^2\xi^2}.$$ With $\xi^2=0$ we simply obtain the PVM with characteristic function $\hat M(p)=e^{-ipQ_{\bf x}}$, that is, the unitary group generated by the quadrature operator $Q_{\bf x}$. Consistently with the notation in previous section, we use $Q_{a|{\bf x}}$ to denote the corresponding PVM elements. Hence, Gaussian PVMs with $d=1$ are just quadrature measurements. In general, the product form of the characteristic function implies that $M_{a|{\bf x},\xi}$ has the convolution form
\cite{KiSc13}:
$$M_a=M_{a|{\bf x},\xi} := \frac{1}{\xi\sqrt{2\pi}}\int e^{-\frac 12 (a-a')^2/\xi^2} Q_{a'|{\bf x}}da'.$$ 
Hence, any Gaussian POVM $M_a$ with $a\in\mathbb R$ is, up to a shift, a ``noisy" quadrature. Interestingly, noise exceeding the uncertainty limit renders quadratures jointly measurable:
\begin{lemma}\label{l:gc}
The noisy versions $M_{{\bf x},\xi}$ and $M_{{\bf y},\xi'}$ of two quadratures $Q_{\bf x}$, $Q_{\bf y}$ are jointly measurable if and only if $$\xi\xi'\geq \|[Q_{\bf x},P_{\bf y}]\|/2,$$
in which case they have a Gaussian joint measurement.
\end{lemma}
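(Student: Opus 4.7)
\emph{Proof plan.} I would prove sufficiency by an explicit Gaussian construction, and necessity by symmetrization together with a structure theorem for covariant POVMs.

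For sufficiency, the idea is to exhibit an explicit bivariate ($d=2$) Gaussian POVM realising the joint measurement. Using \eqref{eqn:gaussian_obs} with $\mathbf{K}=[\mathbf{x}\,|\,\mathbf{y}]$ (the $N\times 2$ matrix whose columns are $\mathbf{x}$ and $\mathbf{y}$), $\mathbf{L}=\mathrm{diag}(2\xi^2,2\xi'^2)$ and $\mathbf{m}=0$, the characteristic function reads
\begin{equation*}
\hat G(p_1,p_2)=W(p_1\mathbf{x}+p_2\mathbf{y})\exp\!\bigl(-\tfrac{1}{2}(p_1^2\xi^2+p_2^2\xi'^2)\bigr).
\end{equation*}
Setting $p_2=0$ recovers $\hat M_{a|\mathbf{x},\xi}$ and $p_1=0$ gives $\hat M_{b|\mathbf{y},\xi'}$, so the marginals are the required noisy quadratures. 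Writing $c:=\mathbf{x}^T\mathbf{\Omega}\mathbf{y}$, the antisymmetric matrix $\mathbf{K}^T\mathbf{\Omega}\mathbf{K}$ has off-diagonal entry $c$, and the positivity condition \eqref{possu} reduces to the $2\times 2$ inequality
\begin{equation*}
\begin{pmatrix} 2\xi^2 & -ic \\ ic & 2\xi'^2 \end{pmatrix}\geq 0,
\end{equation*}
which by the determinant test is equivalent to $4\xi^2\xi'^2\geq c^2$, i.e.\ $\xi\xi'\geq|c|/2=\|[Q_{\mathbf{x}},Q_{\mathbf{y}}]\|/2$. This simultaneously gives sufficiency and the ``Gaussian joint measurement'' part of the claim.

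For the converse, suppose $\{G_{(a,b)}\}$ is any (not necessarily Gaussian) joint POVM with these marginals. The key fact is that noisy quadratures are translation-covariant: $W(\mathbf{t})^*M_{a|\mathbf{x},\xi}W(\mathbf{t})=M_{a-\mathbf{t}^T\mathbf{\Omega}\mathbf{x}|\mathbf{x},\xi}$, as follows from the Heisenberg-picture action on $Q_{\mathbf{x}}$, and similarly in the $\mathbf{y}$-direction. Choosing phase-space vectors $\mathbf{z}_1,\mathbf{z}_2$ such that the induced outcome shifts on $(a,b)$ span $\mathbb{R}^2$, I would Ces\`aro-average $G$ over the resulting two-parameter translation subgroup to obtain a covariant joint POVM $\tilde G$ with the same marginals. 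A structure theorem for covariant POVMs on $\mathbb{R}^2$ (of Holevo--Werner type) then forces $\tilde G$ to be of the Gaussian form already used in sufficiency, and the positivity computation above yields $\xi\xi'\geq|c|/2$.

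The main obstacle is making the symmetrization and the structure-theorem step rigorous in the CV setting, where the relevant invariant means have to be interpreted in a suitable weak sense. A cleaner alternative I would try in parallel is a direct second-moment argument, bypassing any Gaussianization: for every state $\rho$ the classical covariance matrix of the outcomes $(a,b)$ under $G$ must be positive semidefinite, its marginal entries are forced to be $\mathrm{Var}_\rho(Q_{\mathbf{x}})+\xi^2$ and $\mathrm{Var}_\rho(Q_{\mathbf{y}})+\xi'^2$, and combining this positivity (together with an estimate on $\mathrm{Cov}(a,b)$) with the quantum uncertainty $\mathbf{V}_\rho+i\mathbf{\Omega}\geq 0$ restricted to the subspace spanned by $\mathbf{x}$ and $\mathbf{y}$ should directly force $\xi\xi'\geq|c|/2$, by optimising over $\rho$ (e.g.\ squeezed states in this two-dimensional symplectic sector).
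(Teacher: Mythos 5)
Your sufficiency argument is correct and takes a genuinely different, more economical route than the paper's: you build the two-outcome joint observable directly from the characteristic-function parametrisation \eqref{eqn:gaussian_obs} with $\mathbf{K}=[\mathbf{x}\,|\,\mathbf{y}]$ and $\mathbf{L}=\mathrm{diag}(2\xi^2,2\xi'^2)$, so that \eqref{possu} becomes the $2\times 2$ determinant inequality $4\xi^2\xi'^2\geq c^2$, and the marginals are read off by setting $p_1=0$ or $p_2=0$. The paper instead first reduces to a canonical pair by rescaling $Q_{\mathbf{y}}\mapsto Q_{\mathbf{y}/r}$ (a postprocessing) and a symplectic unitary, and then exhibits an explicit Weyl-covariant joint observable $G_{a_1,a_2}=W(a_1,a_2)\rho_0W(a_1,a_2)^*/(2\pi)$ built from a squeezed Gaussian state $\rho_0$. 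Your route buys a one-line positivity check; the paper's buys the covariant form that it then reuses for the converse.

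The gap is in your necessity step. The Holevo--Werner structure theorem for the averaged POVM does \emph{not} force $\tilde G$ into the Gaussian form used in your sufficiency argument; it forces only the covariant form $\tilde G_{a,b}=W(a,b)\rho_0W(a,b)^*/(2\pi)$ for \emph{some} state $\rho_0$, and a state whose outcome distributions along two quadrature directions are Gaussian need not itself be Gaussian. Hence ``the positivity computation above'', which presupposes the parametrisation $(\mathbf{K},\mathbf{L})$, cannot be invoked at that point. The correct conclusion --- and the one the paper draws --- is that the marginal conditions identify $\xi$ and $\xi'/r$ as the standard deviations of two conjugate quadratures in the state $\rho_0$, so the bound $\xi\xi'\geq|c|/2$ follows from the Heisenberg uncertainty relation for $\rho_0$, not from Gaussianity of $\tilde G$. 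Your fallback second-moment argument is the Arthurs--Goodman route and can be made to close (optimising over squeezed states supported on $\mathrm{span}\{\mathbf{x},\mathbf{y}\}$, with $\mathrm{Var}_\rho(Q_{\mathbf{x}})=|c|\xi/(2\xi')$, yields exactly $\xi\xi'\geq|c|/2$), but the ``estimate on $\mathrm{Cov}(a,b)$'' you gesture at is precisely the nontrivial content: one must pass to a Naimark dilation in which the two outcome operators commute and bound the commutator of the resulting noise operators, with the attendant domain issues for unbounded quadratures. Note finally that the paper itself outsources the averaging step to Refs.~\cite{We04,CaHeTo05,BuHeLa07}, so the obstacle you acknowledge there is the same one the paper resolves by citation rather than by proof.
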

This result generalises a known joint measurability criterion for position and momentum \cite{CaHeTo05,BuHeLa07,We04}; see Appendix \ref{Aps:jm-q} for a proof. The crucial point here is the existence of joint Gaussian measurement, which follows from the nontrivial averaging argument of \cite{We04}.

%\subsection{Gaussian quantum channels}

A quantum channel between two CV systems with respective degrees of freedom $N$ and $N'$ is \emph{Gaussian}, if it maps Gaussian states into Gaussian states. In the Heisenberg picture, this entails 
\begin{equation}\label{eqn:gaussian_channel}
\Lambda^*(W({\bf x})) = W({\bf Mx})e^{-\frac{1}{4}{\bf x}^T {\bf Nx} - i{\bf c}^T{\bf x}}
\end{equation}
where ${\bf M}$ is a real $2N\times 2N'$-matrix, and ${\bf N}$ is a real $2N'\times 2N'$-matrix. Due to complete positivity, they satisfy 
\begin{equation}\label{eqn:cp_channel}
{\bf C}_{\bf M,N}+ i {\bf \Omega}\geq {\bf 0},
\end{equation}
where (interestingly) ${\bf C}_{\bf M,N}$ is as in \eqref{possu}. Again, \emph{every} triple $({\bf M}, {\bf N}, {\bf c})$ with \eqref{eqn:cp_channel} defines a Gaussian channel via \eqref{eqn:gaussian_channel}. Unitary channels $B\mapsto U^*BU$ have ${\bf N}={\bf 0}$ and ${\bf M}={\bf S}$ symplectic, i.e. $U=D_{\bf c}U_{\bf S}$. Using \eqref{covdef} and \eqref{eqn:gaussian_channel} we get the general transformation rule for states in terms of CMs and displacement vectors:
\begin{equation}\label{channelcov}
{\bf V}\mapsto {\bf M}^T {\bf V}{\bf M}+{\bf N},\quad {\bf r}\mapsto {\bf M}^T{\bf r}+ {\bf c}.
\end{equation}

Similarly, a Gaussian channel with matrices $({\bf M},{\bf N},{\bf c})$, followed by a Gaussian measurement with matrices $({\bf K}, {\bf L}, {\bf m})$ is clearly a Gaussian measurement as well, and we can easily derive the associated matrices by combining \eqref{eqn:gaussian_obs} and \eqref{eqn:gaussian_channel}; there the result is
\begin{equation}\label{eqn:ch_meas}
({\bf K},{\bf L}, {\bf m})\mapsto ({\bf MK}, {\bf L}+{\bf K}^T{\bf NK}, {\bf m}+{\bf K}^T{\bf c}).
\end{equation}
Using \eqref{eqn:ch_meas} we observe that (for ${\bf c}=0$) the channel transforms a quadrature PVM $Q_{{\bf x}}$ into the noisy POVM $M_{{\bf Mx},\xi}$ where now $\xi^2= {\bf x}^T{\bf Nx}/2$.

Finally, a Gaussian post-processing (\emph{classical} channel) is one which transforms every Gaussian probability distribution into another one. These are determined by triples $({\bf M},{\bf N},{\bf c})$ as in the above quantum case, \emph{except that only $\bf N\geq 0$ is required} as complete positivity does not appear in the classical case. One can show that the matrices are associated with linear coordinate transformations, convolutions, and translations, respectively \cite{gibc}. Note that linear transformations include the deterministic post-processings which simply project on a lower-dimensional subspace. A Gaussian measurement $({\bf K}, {\bf L}, {\bf m})$, followed by a Gaussian postprocessing $({\bf M},{\bf N},{\bf c})$, is again a Gaussian measurement, with parameters obtained by the transformation rule
\begin{equation}\label{eqn:meas_post}
({\bf K},{\bf L}, {\bf m})\mapsto ({\bf KM}, {\bf N}+{\bf M}^T{\bf LM}, {\bf c}+{\bf M}^T{\bf m}).
\end{equation}

\subsection{State-channel correspondence and Gaussian steering}
We are now ready to prove our main results on Gaussian steering. We start with the Gaussian version of the state-channel duality:

\begin{lemma}\label{l:gau_cj} There is a 1-to-1 correspondence between bipartite Gaussian states $\rho$ sharing a marginal $\sigma = {\rm tr}_A[\rho]$ with CM ${\bf V}_\sigma$ of full symplectic rank and displacement ${\bf r}_\sigma$, and Gaussian channels $\sch$ from Bob to Alice, such that \eqref{eq:st_ch} holds with $|\Omega\rangle$ having CM and displacement
$$
{{\bf V}_{\Omega} = \begin{pmatrix}
{\bf V}_\sigma & {\bf S}^T{\bf Z}{\bf S} \\
{\bf S}^T{\bf Z}{\bf S} & {\bf V}_\sigma
\end{pmatrix}}, \quad {\bf r}_{\Omega} = {\bf r}_\sigma\oplus {\bf r}_\sigma.
$$ Here ${\bf S}$ is a symplectic matrix diagonalising ${\bf V}_\sigma$, and ${\bf Z} = \oplus_{i=1}^N\sqrt{\nu_i^2-1} \,\sigma_z$, with $\nu_i$ the symplectic eigenvalues of ${\bf V}_\sigma$. The correspondence between the parameters $({\bf V},{\bf r})$ and $(\bf M,\bf N,c)$ of $\rho$ and $\sch$, respectively, is explicitly given by
\begin{align*}
\left\{\begin{array}{ll}
{\bf V} &= \left( \begin{array}{cc}
{\bf V}_A & {\bf \Gamma}^T \\
{\bf \Gamma} & {\bf V}_\sigma
\end{array}\right),\\
{\bf r} &= {\bf r}_A\oplus {\bf r}_B\end{array}\right.
 &\leftrightarrow \left\{\begin{array}{ll} {\bf M}&= ({\bf S}^T{\bf Z}{\bf S})^{-1}{\bf \Gamma}\\
{\bf N}&= {\bf V}_A-{\bf M}^T{\bf V}_\sigma {\bf M}\\
{\bf c} &= {\bf r}_A-{\bf M}^T{\bf r}_\sigma \end{array}\right.,
\end{align*}
where the positivity conditions are equivalent: ${\bf V} +i{\bf \Omega}\geq 0$ iff ${\bf C}_{\bf M,N}+ i{\bf \Omega}\geq 0$.
\end{lemma}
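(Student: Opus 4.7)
The plan is to establish the Gaussian state-channel correspondence directly at the level of phase-space parameters $({\bf V},{\bf r})$ and $({\bf M},{\bf N},{\bf c})$, mirroring the strategy of Lemma~\ref{l:channel} but with the Schr\"odinger transformation rule \eqref{channelcov} replacing the Kraus analysis. First I would verify that $|\Omega\rangle$ as defined is a legitimate pure Gaussian state with Bob-marginal $\sigma$. The marginal is read off immediately as the lower-right $2N\times 2N$ block of ${\bf V}_\Omega$, and purity follows by conjugating ${\bf V}_\Omega$ with the symplectic ${\bf S}^{-1}\oplus{\bf S}^{-1}$, which block-diagonalises the covariance matrix into a direct sum of single-mode two-mode-squeezed-vacuum blocks
\[
\bigoplus_{i=1}^N\begin{pmatrix}\nu_i{\bf I}_2 & \sqrt{\nu_i^2-1}\,\sigma_z \\ \sqrt{\nu_i^2-1}\,\sigma_z & \nu_i{\bf I}_2\end{pmatrix},
\]
each of which manifestly has both symplectic eigenvalues equal to $1$.

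For the forward direction I would apply \eqref{channelcov} to $\sch\otimes {\rm Id}$ acting on $|\Omega\rangle\langle\Omega|$: the effective linear and noise matrices are ${\bf M}\oplus {\bf I}_{2N}$ and ${\bf N}\oplus {\bf 0}$, and a direct block multiplication of $({\bf M}\oplus{\bf I})^T{\bf V}_\Omega({\bf M}\oplus{\bf I})+({\bf N}\oplus{\bf 0})$ produces exactly the claimed block structure, with ${\bf V}_A={\bf M}^T{\bf V}_\sigma{\bf M}+{\bf N}$, ${\bf \Gamma}={\bf S}^T{\bf Z}{\bf S}{\bf M}$, and the Bob block preserved as ${\bf V}_\sigma$; the displacement transforms analogously to ${\bf r}_A={\bf M}^T{\bf r}_\sigma+{\bf c}$. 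Conversely, given $\rho$ in the stated block form, the formulas ${\bf M}=({\bf S}^T{\bf Z}{\bf S})^{-1}{\bf \Gamma}$, ${\bf N}={\bf V}_A-{\bf M}^T{\bf V}_\sigma{\bf M}$, ${\bf c}={\bf r}_A-{\bf M}^T{\bf r}_\sigma$ invert these relations; invertibility of ${\bf S}^T{\bf Z}{\bf S}$ is equivalent to all $\nu_i>1$, which is precisely the full symplectic rank hypothesis on $\sigma$.

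The technical heart of the proof, and in my view the main obstacle, is to show that the two positivity constraints ${\bf V}+i{\bf \Omega}\geq 0$ (valid bipartite state) and ${\bf C}_{{\bf M},{\bf N}}+i{\bf \Omega}\geq 0$ (valid channel) coincide under this bijection. I would invoke the Schur complement: since ${\bf V}_\sigma+i{\bf \Omega}>0$ by full symplectic rank, ${\bf V}+i{\bf \Omega}\geq 0$ is equivalent to ${\bf V}_A+i{\bf \Omega}-{\bf \Gamma}^T({\bf V}_\sigma+i{\bf \Omega})^{-1}{\bf \Gamma}\geq 0$. Substituting the explicit forms of ${\bf V}_A$ and ${\bf \Gamma}$ reduces this to ${\bf N}+i{\bf \Omega}-i{\bf M}^T{\bf \Omega}{\bf M}\geq 0$ provided one establishes the key algebraic identity
\[
({\bf S}^T{\bf Z}{\bf S})\,({\bf V}_\sigma+i{\bf \Omega})^{-1}\,({\bf S}^T{\bf Z}{\bf S}) = {\bf V}_\sigma+i{\bf \Omega}.
\]
I would prove this by conjugating both sides with ${\bf S}^{-T}$ on the left and ${\bf S}^{-1}$ on the right; the symplectic property ${\bf S}^{-T}{\bf \Omega}{\bf S}^{-1}={\bf \Omega}$ then reduces the claim to the per-mode statement ${\bf Z}_i(\nu_i{\bf I}_2+i{\bf \Omega}_2)^{-1}{\bf Z}_i=\nu_i{\bf I}_2+i{\bf \Omega}_2$, which follows by direct $2\times 2$ calculation from ${\bf Z}_i^2=(\nu_i^2-1){\bf I}_2$ and the sign-flip $\sigma_z{\bf \Omega}_2\sigma_z=-{\bf \Omega}_2$. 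This identity is the phase-space encoding of the purity of $|\Omega\rangle$, and it is exactly what forces the two positivity conditions to match, completing both the well-definedness of the inverse map and the claimed equivalence.
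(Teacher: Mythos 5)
Your proposal is correct and follows essentially the same route as the paper's Appendix~\ref{Aps:gauss-dual}: identify $|\Omega\rangle$ as the (symplectically constructed) Gaussian purification of $\sigma$, push its covariance matrix through the transformation rule \eqref{channelcov} to get the block form and its inverse, and reduce the equivalence of the two positivity conditions to a Schur complement computation. Your key identity $({\bf S}^T{\bf Z}{\bf S})({\bf V}_\sigma+i{\bf \Omega})^{-1}({\bf S}^T{\bf Z}{\bf S})={\bf V}_\sigma+i{\bf \Omega}$ is exactly the per-mode identity ${\bf D}-{\bf Z}({\bf D}+i{\bf \Omega})^{-1}{\bf Z}=-i{\bf \Omega}$ used in \eqref{schur}, merely conjugated by the symplectic ${\bf S}$.
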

The proof of this Lemma is given in Appendix \ref{Aps:gauss-dual}. Interestingly, the equivalence of the inequalities is obtained via Schur complements, which have recently found applications in the investigation of quantum correlations \cite{lami}. Using the Lemmas \ref{l:gc} and \ref{l:gau_cj} we finally prove
\begin{theorem}\label{th:gau_st}
Let $\rho$ be a bipartite Gaussian state with CM ${\bf V}$ and displacement ${\bf r}$, and $({\bf M, N,c})$ the matrices of the channel $\sch$ given by Lemma \ref{l:gau_cj}. The following are equivalent:\\
\noindent{}(i) $\rho$ is steerable by the set of Gaussian measurements.\\
\noindent{}(ii) $\rho$ is steerable by some canonical pair of quadratures.\\
\noindent{}(iii) ${\bf V} + i({\bf 0}\oplus {\bf \Omega})$ is not positive semidefinite.\\
\noindent{}(iv) $({\bf M,N,c})$ do not define a valid Gaussian observable.\\
\end{theorem}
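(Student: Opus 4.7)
The plan is to prove the cycle $(\mathrm{ii})\Rightarrow(\mathrm{i})\Rightarrow(\mathrm{iv})\Rightarrow(\mathrm{ii})$ together with the separate equivalence $(\mathrm{iii})\Leftrightarrow(\mathrm{iv})$. The implication $(\mathrm{ii})\Rightarrow(\mathrm{i})$ is trivial, since canonical pairs of quadratures are a special case of Gaussian measurements. Throughout I use Theorem~\ref{th:asmb} in the Gaussian setting: $\rho$ is steerable by a Gaussian assemblage $\{A_{a|x}\}$ if and only if $\{\sch^*(A_{a|x})\}$ is incompatible, where $\sch$ is the Gaussian channel of Lemma~\ref{l:gau_cj}.

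For $(\mathrm{i})\Rightarrow(\mathrm{iv})$ I argue the contrapositive. A direct comparison of Eqs.~\eqref{eqn:ch_meas} and~\eqref{eqn:meas_post} shows that the action of the Heisenberg channel $\sch^*$ with parameters $(\mathbf M,\mathbf N,\mathbf c)$ on a Gaussian measurement $(\mathbf K,\mathbf L,\mathbf m)$ is algebraically identical to the classical Gaussian post-processing of a fixed observable $(\mathbf M,\mathbf N,\mathbf c)$ by the classical channel $(\mathbf K,\mathbf L,\mathbf m)$. Hence, if $(\mathbf M,\mathbf N,\mathbf c)$ \emph{is} a valid Gaussian observable, every Gaussian assemblage pushed through $\sch^*$ is a family of post-processings of the single bona fide observable $(\mathbf M,\mathbf N,\mathbf c)$ and is therefore jointly measurable. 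Theorem~\ref{th:asmb} then yields unsteerability of $\rho$ by Gaussian measurements, contradicting~(i).

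For $(\mathrm{iv})\Rightarrow(\mathrm{ii})$ I extract an explicit canonical steering pair from the failure of observable positivity using Lemma~\ref{l:gc}. Non-positivity of $\mathbf C_{\mathbf M,\mathbf N}=\mathbf N-i\mathbf M^T\mathbf\Omega_B\mathbf M$ produces some $\mathbf x=\mathbf x_1+i\mathbf x_2\in\mathbb C^{2N_A}$ with $\bar{\mathbf x}^T\mathbf C_{\mathbf M,\mathbf N}\mathbf x<0$; expanding the real and imaginary parts reduces this to $a+b+2c<0$, where $a:=\mathbf x_1^T\mathbf N\mathbf x_1$, $b:=\mathbf x_2^T\mathbf N\mathbf x_2$ and $c:=\mathbf x_1^T\mathbf M^T\mathbf\Omega_B\mathbf M\mathbf x_2$. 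The channel CP condition from Lemma~\ref{l:gau_cj} evaluated on real vectors forces $\mathbf N\geq 0$, so $a,b\geq 0$, and AM--GM then turns $a+b+2c<0$ into $c^2>ab$---precisely the negation of the joint-measurability bound of Lemma~\ref{l:gc} for the noisy quadratures $\sch^*(Q_{\mathbf x_1}),\sch^*(Q_{\mathbf x_2})$ (directions $\mathbf M\mathbf x_i$ on Bob's side, noise variances $a/2,b/2$, commutator magnitude $|c|$). Evaluating the same CP condition on the \emph{complex} $\mathbf x$ yields as a bonus $\mathbf x_1^T\mathbf\Omega_A\mathbf x_2\neq 0$, so a sign-and-scale rescaling of $(\mathbf x_1,\mathbf x_2)$---under which the incompatibility inequality $c^2>ab$ is manifestly invariant---turns $(Q_{\mathbf x_1},Q_{\mathbf x_2})$ into a canonical pair on Alice's side. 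Theorem~\ref{th:asmb} then promotes the Bob-side incompatibility to steerability of $\rho$ by this canonical pair.

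For $(\mathrm{iii})\Leftrightarrow(\mathrm{iv})$ I reuse the Schur-complement identity underpinning Lemma~\ref{l:gau_cj}'s positivity equivalence $\mathbf V+i\mathbf\Omega\geq 0\Leftrightarrow\mathbf C_{\mathbf M,\mathbf N}+i\mathbf\Omega_A\geq 0$. Schur-complementing $\mathbf V+i(\mathbf 0\oplus\mathbf\Omega_B)$ against its lower-right block $\mathbf V_\sigma+i\mathbf\Omega_B$ and substituting $\mathbf M=(\mathbf S^T\mathbf Z\mathbf S)^{-1}\mathbf\Gamma$, $\mathbf N=\mathbf V_A-\mathbf M^T\mathbf V_\sigma\mathbf M$, exactly as in the proof of Lemma~\ref{l:gau_cj} but with the $\mathbf\Omega_A$ contribution zeroed on both sides, collapses the steering condition~(iii) to the observable condition~(iv). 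The main obstacle is the linear-algebraic identity $(\mathbf S^T\mathbf Z\mathbf S)^T(\mathbf V_\sigma+i\mathbf\Omega_B)^{-1}(\mathbf S^T\mathbf Z\mathbf S)=\mathbf V_\sigma+i\mathbf\Omega_B$ required to make the substitution work; it follows from the Williamson diagonalization of $\mathbf V_\sigma$ together with the symplectic invariance $\mathbf S^T\mathbf\Omega_B\mathbf S=\mathbf\Omega_B$, and it is inherited here rather than re-derived since it is already embedded in Lemma~\ref{l:gau_cj}'s proof.
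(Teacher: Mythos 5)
Your proposal is correct and follows essentially the same route as the paper's proof: the trivial $(\mathrm{ii})\Rightarrow(\mathrm{i})$, the Schur-complement identification of ${\bf C}_{\bf M,N}$ inside ${\bf V}+i({\bf 0}\oplus{\bf \Omega})$ for $(\mathrm{iii})\Leftrightarrow(\mathrm{iv})$, the reinterpretation of the channel parameters $({\bf M,N,c})$ as a parent Gaussian observable for $(\mathrm{i})\Rightarrow(\mathrm{iv})$ (which the paper delegates to Prop.~2 of \cite{gibc} and spells out in the remark after the theorem), and the same complex-witness, AM--GM, Lemma~\ref{l:gc} construction of a canonical steering pair for $(\mathrm{iv})\Rightarrow(\mathrm{ii})$. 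The only cosmetic difference is bookkeeping: you rescale both vectors symmetrically with a sign adjustment where the paper rescales by $r^{-1/2}$ after arranging ${\bf x}^T{\bf \Omega}{\bf y}>0$, and your scale-invariance observation for $c^2>ab$ plays the role of the paper's inequality $2\xi\xi'\leq\xi^2+\xi'^2<({\bf Mx})^T{\bf \Omega}{\bf My}$.
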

\begin{proof}
We first note that (ii) trivially implies (i). Next, we repeat the calculation \eqref{schur} in the proof of Lemma \ref{l:gau_cj} (see Appendix \ref{Aps:gauss-dual}) without ${\bf \Omega}_A$, which establishes that ${\bf C}_{\bf M,\bf N}$ is the Schur complement of ${\bf V}_{\sigma}+i{\bf \Omega}_B$ in ${\bf V}_\rho+i({\bf 0}\oplus {\bf \Omega}_B)$. This shows that (iii) and (iv) are equivalent. Furthermore, using \cite[Prop.\ 2]{gibc} we conclude that $\sch$ maps the set of all Gaussian measurements into a set having a joint (Gaussian) measurement, if ${\bf C}_{\bf M,N}\geq 0$. Hence (i) implies (iv).

We are left with the proof of the main result, stating that (iv) implies (ii). Assuming (iv) let ${\bf x},{\bf y}$ be vectors such that
$({\bf y}^T-i{\bf x}^T){\bf C}_{\bf M,N}({\bf y}+i{\bf x})<0$. Then by complete positivity $({\bf y}^T-i{\bf x}^T)({\bf C}_{\bf M,N}+i{\bf \Omega})({\bf y}+i{\bf x})\geq 0$, which implies $r:={\bf x}^T{\bf \Omega}{\bf y}>0$ and
\begin{equation}\label{urc}
({\bf Mx})^T{\bf \Omega}{\bf My}>\frac 12({\bf x}^T{\bf N}{\bf x} + {\bf y}^T{\bf N}{\bf y}).
\end{equation}
Clearly, we may replace ${\bf x}$ and ${\bf y}$ with $r^{-\frac 12} {\bf x}$ and $r^{-\frac 12}{\bf y}$ and \eqref{urc} still holds. Then the pair $Q_{\bf x}={\bf x}^T{\bf R}$ and $P_{\bf y}={\bf y}^T{\bf R}$ of quadratures is canonical since ${\bf x}^T{\bf \Omega}{\bf y}=1$. It is easy to check using the transformation rule \eqref{eqn:ch_meas} that the channel $\sch$, having parameters $({\bf M,N,c})$, transforms the associated PVMs into the POVMs $M_{{\bf Mx},\xi}$ and $M_{{\bf My},\xi'}$ (up to irrelevant shifts in outcomes) where $\xi^2= {\bf x}^T{\bf Nx}/2$ and $\xi'^2={\bf y}^T{\bf Ny}/2$. By \eqref{urc} we have $2\xi\xi' \leq \xi^2+\xi'^2< ({\bf Mx})^T{\bf \Omega}{\bf My}$ so from Lemma \ref{l:gc} we conclude that the POVMs are not jointly measurable. This means we have found a canonical pair $(Q_{\bf x}, Q_{\bf y})$ of quadratures such that $\sch(Q_{\bf x})$ and $\sch(Q_{\bf y})$ are not jointly measurable, so according to Th.~\ref{th:asmb}, the state $\rho$ is steerable by this pair. Hence, (ii) holds. The proof is complete.
\end{proof}

We remark that the equivalence between (i) and (iii) was originally proven in \cite{WiJoDo07}. Here we use Lemma \ref{l:gc} to show that quadratures are enough [(ii)]; this comes closest to the original notion of steering of an EPR-state via position and momentum as discussed by Schr\"odinger \cite{Schr35}.  Note that the above proof shows explicitly how one can construct quadrature pairs for which steering is possible when the conditions of the theorem hold.

Furthermore, a new interpretation emerges from (iv): the Gaussian POVM determined by the \emph{channel parameters} $({\bf M,N,c})$ is exactly the joint observable for the assemblage $\{\mathsf T^*(A_{\bf a}): A_{\bf a} \text{ Gaussian}\}$ that rules out steering in (i) by Th.~\ref{th:asmb}. In order to explain this in detail, we follow the argument in \cite[Prop. 2]{gibc} mentioned in the above proof: we first note that an arbitrary Gaussian measurement $({\bf K,L,m})$ on Alice's side is transformed by the channel $({\bf M,N,c})$ into one with parameters $({\bf K}',{\bf L}', {\bf m}')=({\bf MK}, {\bf L}+{\bf K}^T{\bf NK}, {\bf m}+{\bf K}^T{\bf c})$ by \eqref{eqn:ch_meas}. In order to show that such POVMs are all jointly measurable, we only need to reinterpret the channel parameters $({\bf M,N,c})$ as the joint measurement $G_{\boldsymbol \lambda}$. Indeed, with $({\bf K,L,m})$ taken as postprocessing parameters, \eqref{eqn:meas_post} becomes identical to \eqref{eqn:ch_meas}, showing how $({\bf K}',{\bf L}', {\bf m}')$ is postprocessed from $G_{\boldsymbol \lambda}$. We stress that the nontrivial part is in the positivity requirements, which are \emph{not} in general identical. Indeed, the reinterpretation is possible only when (iv) does not hold, i.e. ${\bf C}_{\bf M,N}\geq {\bf 0}$, which is not true for general channels.

For the sake of completeness, and in order to further demonstrate that the existing formulation of Gaussian steering \cite{WiJoDo07} follows from our theory, we also show how one can easily derive the Gaussian LHS model given in \cite{WiJoDo07} from the above results. Since this is not essential for understanding our main results, the derivation is given in Appendix \ref{App:LHS}.

Finally, in addition to its impact on Gaussian steering, Thm.~\ref{th:gau_st} yields
\begin{corollary}
A Gaussian channel which maps each canonical quadrature pair into a jointly measurable pair, is \emph{Gaussian incompatibility breaking} in the sense of \cite{gibc}.
\end{corollary}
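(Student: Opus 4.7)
The plan is to reduce the claim to the contrapositive of the implication (iv) $\Rightarrow$ (ii) in Theorem \ref{th:gau_st}, combined with the postprocessing reinterpretation of the channel parameters sketched in the discussion immediately following that theorem. Let $\sch$ be any Gaussian channel with parameters $({\bf M},{\bf N},{\bf c})$ as in \eqref{eqn:gaussian_channel}, and assume that $\sch^*(Q_{\bf x})$ and $\sch^*(Q_{\bf y})$ are jointly measurable for every canonical pair $(Q_{\bf x},Q_{\bf y})$.

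The first step is to show that this hypothesis forces ${\bf C}_{\bf M,N}\geq 0$. I would argue by contrapositive: if ${\bf C}_{\bf M,N}\not\geq 0$, the witness construction carried out in the proof of (iv) $\Rightarrow$ (ii) in Theorem \ref{th:gau_st} produces vectors ${\bf x},{\bf y}$ satisfying \eqref{urc}, which after rescaling by $r^{-1/2}$ yield a canonical pair $(Q_{\bf x},Q_{\bf y})$; the transformation rule \eqref{eqn:ch_meas} identifies $\sch^*(Q_{\bf x})$ and $\sch^*(Q_{\bf y})$ with the noisy quadratures $M_{{\bf Mx},\xi}$ and $M_{{\bf My},\xi'}$ where $\xi^2={\bf x}^T{\bf Nx}/2$ and $\xi'^2={\bf y}^T{\bf Ny}/2$; and combining \eqref{urc} with the elementary bound $2\xi\xi'\leq\xi^2+\xi'^2$ yields $\xi\xi'<({\bf Mx})^T{\bf \Omega}({\bf My})/2$, which by Lemma \ref{l:gc} rules out joint measurability, contradicting the hypothesis. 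The key point to observe---and where most of the conceptual work lies---is that this entire construction only involves the channel parameters $({\bf M},{\bf N})$ and the complete positivity condition \eqref{eqn:cp_channel}; it has no residual dependence on any state coming from Lemma \ref{l:gau_cj}.

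The second step is then purely algebraic. Once ${\bf C}_{\bf M,N}\geq 0$, the triple $({\bf M},{\bf N},{\bf c})$ itself satisfies the positivity condition \eqref{possu} and thus defines a valid Gaussian POVM, call it $G_{\boldsymbol\lambda}$, whose outcome space has dimension equal to Alice's phase-space dimension. For an arbitrary Gaussian measurement $({\bf K},{\bf L},{\bf m})$ on Alice's side, the transformation rule \eqref{eqn:ch_meas} gives its $\sch^*$-image the parameters $({\bf MK},{\bf L}+{\bf K}^T{\bf NK},{\bf m}+{\bf K}^T{\bf c})$, which is exactly what the Gaussian postprocessing rule \eqref{eqn:meas_post} produces when one postprocesses $G_{\boldsymbol\lambda}$ with data $({\bf K},{\bf L},{\bf m})$. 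Hence every $\sch^*(M)$ with $M$ Gaussian is a Gaussian postprocessing of the single POVM $G_{\boldsymbol\lambda}$, so the entire family $\{\sch^*(M):M\text{ Gaussian}\}$ admits $G_{\boldsymbol\lambda}$ as a common Gaussian joint measurement. This is precisely the definition of a Gaussian incompatibility breaking channel in \cite{gibc}, and the corollary follows. The main obstacle is really just the verification in step one that the (iv) $\Rightarrow$ (ii) argument is state-independent; once that is secured, step two is a matching-of-formulas already articulated after Theorem \ref{th:gau_st}.
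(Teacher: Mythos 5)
Your proof is correct and follows essentially the same route as the paper, which obtains the corollary directly from Thm.~\ref{th:gau_st}: the witness construction in the proof of (iv)~$\Rightarrow$~(ii) forces ${\bf C}_{\bf M,N}\geq 0$, and the postprocessing reinterpretation of $({\bf M,N,c})$ as a joint Gaussian observable (the paragraph following that theorem, i.e.\ the argument of Prop.~2 of \cite{gibc}) finishes the job. Your explicit observation that the (iv)~$\Rightarrow$~(ii) construction depends only on the channel parameters and complete positivity, and not on any state from Lemma~\ref{l:gau_cj}, is exactly the (minor but necessary) point that lets the state-level theorem be applied to an arbitrary Gaussian channel.
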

This considerably strengthens the theory in \cite{gibc}, by showing that \emph{canonical} pairs are sufficient, and that a \emph{Gaussian} joint observable always exists for the jointly measurable Gaussian POVMs. The latter is a nontrivial and a fairly fundamental result which requires Lemma \ref{l:gc}.

\section{Conclusions} Steering is a genuine quantum phenomenon, with important applications both in quantum information processing and foundations of quantum mechanics. Notwithstanding the growing interest in it in the past few years \cite{CaSk16b}, limited results and tools are available in the continuous variable case. We introduced a state-channel correspondence that allows us to discuss the steering problem in a completely general context. In particular, we extend many of the results previously known only in the finite-dimensional case, such as the mathematical equivalence of steering and joint-measurability problems \cite{UoBu15} and the equivalence of steering and joint-measurability for the case of full Schmidt rank states \cite{UoMoGu14,QuVeBr14}.
Moreover, via state-channel duality we are able to connect steerability properties of noisy NOON states with Markovianity properties of the corresponding channel, and to provide an analytical lower bound to the steerability noise threshold for any $N$. Finally, we apply our methods to the Gaussian setting, introducing a new channel characterization of steerability and proving that canonical quadratures are enough for steering. An interesting future direction would be to extensively investigate the capability of the state-channel duality to provide steering, joint measurability, and incompatibility breaking criteria in the continuous variable case for states, observables, and channels, respectively.

\section*{Acknowledgements} The authors thank O. G\"uhne and T. Heinosaari for fruitful discussions. This work has been supported by FWF (Project: M 2107 Meitner-Programm), EPSRC (project EP/M01634X/1), and the Finnish Cultural Foundation.

\appendix

\section{Proof of the general state-channel duality (Lemma \ref{l:channel})}\label{Aps:sc-dual}
Let $|\Omega_\sigma\rangle$ and $\sigma$ be as in Lemma \ref{l:channel}. Given any quantum channel $\sch$, the state
\begin{equation}\label{ch1}
\rho = (\sch \otimes {\rm Id})(|\Omega_\sigma\rangle\langle \Omega_\sigma|)
\end{equation}
clearly has the property ${\rm tr}_A[\rho]=\sigma$, so we have managed to produce more general states than ones obtained by the Choi-Jamio\l{}kowski correspondence. We now need to prove that the new correspondence is one-to-one \emph{onto the set of states with ${\rm tr}_A[\rho]=\sigma$}. Given such as state, we first compute
\begin{align}
\nonumber{\rm tr}[\rho (A\otimes B)] &=\langle \Omega_\sigma |\sch^*(A)\otimes B|\Omega_\sigma\rangle\\
\nonumber &= \sum_{nm} \sqrt{s_ns_m} \langle nn|\sch^*(A)\otimes B|mm\rangle\\
\nonumber &= \sum_{nm}\sqrt{s_ns_m} \langle n|\sch^*(A)|m\rangle\langle n|B|m\rangle\\
\nonumber &= \sum_{nm}\langle n|\sqrt{\sigma}\sch^*(A)\sqrt{\sigma}|m\rangle\langle n|B|m\rangle\\
 & = {\rm tr}[\sqrt{\sigma}\sch^*(A)\sqrt{\sigma}B^\intercal],
\end{align}
where $B^\intercal$ is the transpose of $B$ in the fixed basis. Hence,
\begin{equation}\label{ch}
\sigma^{\frac 12} \sch^*(A)\sigma^{\frac 12}={\rm tr}_A[\rho(A\otimes \id)]^\intercal.
\end{equation}
From this we see immediately that distinct channels correspond to distinct states, since the matrix elements of the state are clearly uniquely determined by those of the channel: $\langle nm|\rho|n'm'\rangle = {\rm tr}[\sqrt{\sigma}\sch^*(|n'\rangle \langle n|)\sqrt{\sigma}(|m'\rangle\langle m|)^\intercal] = \sqrt{s_m}\sqrt{s_{m'}} \langle m'| \sch^*(|n'\rangle \langle n|)|m\rangle$,  where we have now also fixed a basis $\{|n\rangle\}$ on Alice's side.

What remains to be shown is that for \emph{any} state $\rho$ with ${\rm tr}_A[\rho]=\sigma$ there exists a channel $\sch$ such that \eqref{ch1} (or, equivalently, \eqref{ch}) holds. If $d<\infty$ we can invert $\sigma^{-\frac 12}$ in \eqref{ch} to solve for $\sch^*(A)$; however, we still need to show that this defines a channel, i.e.\ a CPTP map. We therefore proceed by writing the state $\rho$ as
\begin{align}
\rho & = \sum_k |\psi_k\rangle\langle \psi_k|  = \sum_{\substack{k,n,m\\ n',m'}}\langle nm|\psi_k\rangle\langle \psi_k| n'm'\rangle\, |nm\rangle\langle n'm'|
\end{align}
so that, for all bounded operators $A$ and $B$ (for Alice and Bob, respectively), we get
\begin{align}
{\rm tr}[\rho (A\otimes B)] &=\sum_k {\rm tr}[R_k^*AR_kB^\intercal],
\end{align}
where $R:\mathcal H_B\to \mathcal H_A$ is the Hilbert-Schmidt operator defined by $\langle n|R_k m\rangle =\langle nm|\psi_k\rangle$. Hence, ${\rm tr}_A[\rho (A\otimes \id)]^\intercal=\sum_k R_k^*AR_k$. In particular, $\sigma=\sigma^\intercal=\sum_k R_k^*R_k$.

Next, we need a little of functional analysis, so as to allow the proof to go through also for $d=\infty$, in which case the inverse of any full rank state is unbounded and requires some care. Let $\mathcal R$ be the dense range of $\sigma$, containing all the basis vectors. Then $\mathcal R={\rm ran}\, \sigma^{\frac 12}$, $\sigma^{\frac 12}$ is injective, and for any $|\psi\rangle\in \mathcal R$ we have $$\|R_k\sigma^{-\frac 12}\psi\|^2\leq \sum_k \langle \sigma^{-\frac 12}\psi|R_k^*R_k\sigma^{-\frac 12}\psi\rangle = \|\psi\|^2,$$
which implies that each $R_k\sigma^{-\frac 12}$ extends to a bounded operator $M_k:\mathcal H_B\to\mathcal H_A$, for which $M_k\sigma^{\frac 12}=R_k$.

Since $\sum_k M_k^*M_k = \id$, the operators $M_k$ set up a Kraus decomposition of a channel: we define
\begin{equation}
\sch(T) := \sum_k M_kTM^*_k
\end{equation}
for all (trace class) operators $T$. This is by construction completely positive, and it is trace-preserving since $\sum_k M_k^*M_k = \id$. In the infinite-dimensional case the series converges, e.g., in the weak topology. Plugging this channel in Eq.~\eqref{ch} immediately gives
\begin{align}
\nonumber
\sigma^{\frac 12} \sch^*(A)\sigma^{\frac 12} &= \sum_k (M_k\sigma^{\frac 12})^*AM_k\sigma^{\frac 12}= \sum_k R_k^*AR_k\\
& = {\rm tr}_A[\rho (A\otimes \id)]^\intercal,
\end{align}
so that \eqref{ch}, and hence also \eqref{ch1} holds, that is, the channel gives back the original state $\rho$. This proves that the correspondence is one-to-one, and completes the proof.

\section{A joint measurability criterion for qubit POVMs with arbitrary outcomes (Lemma \ref{l:jmad})}\label{Aps:jm-crit}

In contrast to most existing criteria, this one applies to qubit POVMs with continuous outcome sets. In the main text, it was shown to be useful for finding noise bounds for quadrature measurements restricted to two-dimensional photon number eigenspaces. 

More generally, we prove that an assemblage of $n$ qubit observables $\{B_{b|i} \}_{i=1}^n$ is jointly measurable if
\begin{equation}\label{eq:apcrit}
\Delta(b_1,\ldots,b_n) :=\sum_i r_i(b_i)-n+1\geq 0,
\end{equation}
where $r_i(b):=\det M_i(b)$, $M_i(b) := B_{b|i}/p_i(b)$, and $p_i(b) = \langle 0|B_{b|i} |0\rangle$. Indeed,
\begin{equation}
M_i(b) =\begin{pmatrix} 1  & \overline{f_i(b)}\\ f_i(b) & r_i(b) + |f_i(b)|^2\end{pmatrix}
\end{equation}
for some complex valued functions $f_i$. The normalisation forces $\int f_i(b) p_i(x)dx =0$ and $\int (|f_i(b)|^2 +r_i(b)) p_i(b)db =1$. We define $G_{b_1,\ldots,b_n}$ via
\begin{equation}\label{JMLHS}
\frac{G_{b_1,\ldots,b_n}}{\prod_{i=1}^n p_i(b_i)} = \begin{pmatrix} 1  & \sum_i\overline{f_i(b_i)}\\ \sum_i f_i(b_i) & |\sum_i f_i(b_i)|^2+\Delta(b_1,\ldots,b_n)\end{pmatrix}.
\end{equation}
Using the constraints we see that it is normalised, and that
\begin{equation}
B_{b|i}=M_i(b)p_i(b) = \int \delta_{b,b_i}G_{b_1,\ldots,b_n} db_1\cdots db_n.
\end{equation}
The critical constraint is $G_{b_1,\ldots,b_n}\geq 0$ now follows from
\begin{equation}
\det G_{b_1,\ldots,b_n}=\Delta(b_1,\ldots,b_n)\prod_{i=1}^n p_i(b_i)^2 \geq 0,
\end{equation}
which is ensured by the assumption. This means that the $B_i$ have a joint observable with deterministic response functions, so they are jointly measurable.

By taking $B_{a|i} = \ad_r^*(A_{a|i})$, where $\ad_r$ is the amplitude damping channel defined in the main text, the assumption Eq.~\eqref{eq:apcrit} becomes Eq.~\eqref{eq:criterion} of the main text, once we notice that $\langle 0|\ad_r^*(A_{a|i})|0\rangle = \langle 0|A_{a|i}|0\rangle$; see \eqref{eq:damp_matrix}. This completes the proof.

\section{Proof of the joint measurability criterion for convoluted quadratures (Lemma \ref{l:gc})}\label{Aps:jm-q}
This lemma was critical for the characterisation of Gaussian steering. In order to prove it we let 
$r={\bf x}^T{\bf \Omega}{\bf y}$, so that $[Q_{\bf x},Q_{\bf y}] = ir\id$. If $r=0$ then $Q_{\bf x}$ and $Q_{\bf y}$ commute and the claim is trivial 
since they stay jointly measurable after convolution. We suppose $r>0$, and look at the scaled quadrature $Q_{\bf y}/r= {\bf y}^T{\bf R}/r= Q_{{\bf y}/r}$. By using the connection $Q_{\bf y}= \int a\, Q_{a|{\bf y}}da$ between the operator $Q_{\bf y}$ and the corresponding PVM $Q_{a|{\bf y}}$, we see that $Q_{a|{\bf y}/r}= rQ_{ra|{\bf y}}$. A direct computation then shows that scaling of the noisy POVM gives $M_{a|{\bf y}/r, \xi'/r}=r M_{ra|{\bf y},\xi'}$. Since scaling is a postprocessing and hence does not affect joint measurability, the original pair $(M_{{\bf x},\xi}, M_{{\bf y}, \xi'})$ is jointly measurable if and only if $(M_{{\bf x},\xi}, M_{{\bf y}/r, \xi'/r})$ is. But the corresponding quadrature pair $(Q_{\bf x},Q_{{\bf y}/r})$ is canonical, as $$[Q_{\bf x}, Q_{{\bf y} /r}]= [{\bf x}^T{\bf R},{\bf y}^T{\bf R}/r]= i({\bf x}^T{\bf \Omega}{\bf y}/r)\id=i\id,$$ and hence unitarily equivalent to the pair $(Q_0,Q_{\pi/2})$ via a symplectic transformation, where $Q_{\theta}= (e^{i\theta}a^\dagger+e^{-i\theta}a)/\sqrt{2}$ are the rotated quadratures of a single-mode system. The same unitary then transforms the convoluted pair $(M_{{\bf x},\xi}, M_{{\bf y}/r, \xi'/r})$ into the pair $(M_{0,\xi},M_{\pi/2,\xi'/r})$ where
$$M_{a|\theta, \xi} := \frac{1}{\sqrt{2\pi}\xi}\int e^{-\frac 12 (a-a')^2/\xi^2} Q_{a'|\theta},$$ and hence it suffices to show that the joint measurability of $(M_{0,\xi},M_{\pi/2,\xi'/r})$ is equivalent to the inequality $\xi(\xi'/r)\geq 1/2$, and that the joint observable, when exists, can be chosen Gaussian.

In order to prove this, we use known results on joint measurability of ``unsharp'' position and momentum \cite{CaHeTo05,BuHeLa07}, which is exactly what our convoluted quadratures are. In particular, if $(M_{0,\xi},M_{\pi/2,\xi'/r})$ are jointly measurable, they must have a joint observable of the Weyl-covariant form $G_{a_1,a_2}= W(a_1,a_2)\rho_0W(a_1,a_2)^*/(2\pi)$, where $\rho_0$ is a state with
\begin{align}
{\rm tr}[\rho_0Q_{a_1|0}]&= \frac{e^{-\frac 12 a_1^2/\xi^2}}{\sqrt{2\pi}\xi}, & 
{\rm tr}[\rho_0 Q_{a_2|\pi/2}]&=\frac{e^{-\frac 12 a_2^2/(\xi'/r)^2}}{\sqrt{2\pi}(\xi'/r)}.
\end{align}
This implies that $\xi$ and $\xi'/r$ are the standard deviations of $Q_0$ and $Q_{\pi/2}$ in the state $\rho_0$, hence satisfying $\xi(\xi'/r)\geq 1/2$ by the Heisenberg uncertainty principle. Conversely, if the inequality holds, we can define $\rho_0=|\psi_0\rangle\langle \psi_0|$ in the coordinate representation as $\psi_0(a)=(2c/\pi)^{\frac 14} e^{-(c+iw)a^2}$ with $\xi^2=1/(4c)$ and $\xi'^2/r^2=(c^2+d^2)/d$; then a direct computation shows that the corresponding $G_{a_1,a_2}$ is a joint observable for $M_{0,\xi}$ and $M_{\pi/2,\xi'/r}$. This observable is Gaussian since $\rho_0$ is a Gaussian state \cite{KiSc13}.

Finally, since all the above unitary equivalences were done via symplectic transformations, the original POVMs have a Gaussian joint observable as well. This completes the proof.

\section{Proof of the Gaussian state-channel duality (Lemma \ref{l:gau_cj})}\label{Aps:gauss-dual}

The difference to the general case (considered above) is that in order to preserve Gaussianity, we need to do the diagonalisation of the reference state $\sigma$ ``symplectically'' (see e.g. \cite{AdessoPhD}): Let ${\bf V}_\sigma$ be the CM of $\sigma$ and ${\bf r}_\sigma$ the displacement. By Williamson's theorem \cite{Wi36} there is a symplectic matrix ${\bf S}$ such that ${\bf V}_\sigma = {\bf S}^T{\bf D}{\bf S}$ with ${\bf D}=\oplus_{k=1}^N \nu_k \id_2$, where $\nu_k$ are the symplectic eigenvalues of ${\bf V}_\sigma$, and we assume $\nu_i>1$ (full symplectic rank). This is not restrictive as any $\nu_i=1$ corresponds to a vacuum mode, which we may factor out from the system. Then $U=D_{{\bf r}_\sigma}U_{\bf S}$ diagonalises $\sigma$ in the photon number basis $|{\bf n}\rangle=|n_1,\ldots,n_N\rangle$:
\begin{equation}
U^*\sigma U= \sum_{\bf n} p_{\bf n} |{\bf n}\rangle\langle {\bf n}|, 
\quad p_{\bf n}=\prod_{k=1}^N \frac{2}{1+\nu_k} \left(\frac{\nu_k-1}{\nu_k+1}\right)^{n_k}.
\end{equation}
 Moreover, the purification $\sum_{\bf n} \sqrt{p_{\bf n}} |{\bf n}\rangle\otimes |{\bf n}\rangle$ has the CM 
 \begin{equation}
 \left( \begin{array}{cc}
{\bf D} & {\bf Z} \\
{\bf Z} & {\bf D}
\end{array}\right) \quad \text{with } {\bf Z} = \bigoplus_{i=1}^N\sqrt{\nu_i^2-1}\, \sigma_z.
\end{equation}

The eigenbasis $\{ U|{\bf n}\rangle\}$ of $\sigma$ is the one we use to construct the steering channels following the general scheme (see Lemma~1). Hence we form the purification
\begin{equation}
\Omega_\sigma = \sum_{\bf n} \sqrt{p_{\bf n}} U|{\bf n}\rangle\otimes U|{\bf n}\rangle
\end{equation}
which by \eqref{channelcov} has displacement vector ${\bf r}_\sigma\oplus {\bf r}_\sigma$ and CM
\begin{equation}
{\bf V}_{\Omega_\sigma} = ({\bf S}^T\oplus {\bf S}^T)\left( \begin{array}{cc}
{\bf D} & {\bf Z} \\
{\bf Z} & {\bf D}
\end{array}\right)({\bf S}\oplus {\bf S})= \begin{pmatrix}
{\bf V}_\sigma & {\bf S}^T{\bf Z}{\bf S} \\
{\bf S}^T{\bf Z}{\bf S} & {\bf V}_\sigma
\end{pmatrix}
\end{equation} as stated in the Lemma. Again by \eqref{channelcov}, the application of a Gaussian channel $\Lambda$ with matrices $({\bf M},{\bf N},{\bf c})$ yields the state $\rho:=(\Lambda\otimes {\rm Id})(|\Omega_\sigma\rangle\langle\Omega_\sigma|)$ with CM
\begin{align}
{\bf V} &= ({\bf M}^T\oplus {\bf I})  \begin{pmatrix}
{\bf V}_\sigma & {\bf S}^T{\bf Z}{\bf S} \\
{\bf S}^T{\bf Z}{\bf S} & {\bf V}_\sigma
\end{pmatrix}({\bf M}\oplus {\bf I})+{\bf N}\oplus {\bf 0}\nonumber\\
&= \begin{pmatrix}
{\bf M}^T{\bf V}_\sigma{\bf M}+{\bf N} & {\bf M}^T{\bf S}^T{\bf Z}{\bf S} \\
{\bf S}^T{\bf Z}{\bf SM} & {\bf V}_\sigma \label{CMrho}
\end{pmatrix},
\end{align}
and displacement ${\bf r}=({\bf M}^T{\bf r}_\sigma +{\bf c})\oplus {\bf r}_\sigma$.
Now ${\bf V}_\rho+i{\bf \Omega}\geq 0$ if and only if ${\bf C}\geq 0$ where ${\bf C}$ is the Schur complement of the block ${\bf V}_\sigma+i{\bf \Omega}_B$ in ${\bf V}_\rho+i{\bf \Omega}$. But
\begin{align}
\nonumber{\bf C} &= {\bf M}^T{\bf V}_\sigma{\bf M} +{\bf N}+ i{\bf \Omega}_A\\
&-{\bf M}^T{\bf S}^T{\bf Z}{\bf S} ({\bf V}_\sigma+i{\bf \Omega}_B)^{-1} {\bf S}^T{\bf Z}{\bf SM}\nonumber\\
&={\bf N}+i{\bf \Omega}_A+ {\bf M}^T{\bf S}^T({\bf D} - {\bf Z} ({\bf D}+i{\bf \Omega}_B)^{-1} {\bf Z}){\bf S}{\bf M}\nonumber\\
&= {\bf C}_{\bf M,N}+i{\bf \Omega}_A,\label{schur}
\end{align}
where we have used ${\bf D} - {\bf Z} ({\bf D}+i{\bf \Omega})^{-1} {\bf Z}={\bf \Omega}$
which is straightforward to verify. This shows that ${\bf C}_{\bf M,N}+i{\bf \Omega}_A\geq 0$ is \emph{equivalent} to ${\bf V}_\rho$ being a valid CM. Now for any given Gaussian state $\rho$ with CM and displacement vector
\begin{equation}\label{genrho}{
{\bf V} = \left( \begin{array}{cc}
{\bf V}_A & {\bf \Gamma}^T \\
{\bf \Gamma} & {\bf V}_\sigma
\end{array}\right), \quad {\bf r}={\bf r}_A\oplus {\bf r}_\sigma,
}\end{equation}
we can define 
\begin{equation}({\bf M}, {\bf N}, {\bf c})= (({\bf S}^T{\bf Z}{\bf S})^{-1}{\bf \Gamma},\ {\bf V}_A-{\bf M}^T{\bf V}_\sigma {\bf M}, \ {\bf r}_A-{\bf M}^T{\bf r}_\sigma),
\end{equation}
which then satisfies \eqref{CMrho}, so that ${\bf C}_{\bf M,N}+i{\bf \Omega}_A\geq 0$ by the above equivalence, showing that $({\bf M}, {\bf N}, {\bf c})$ determines a Gaussian channel $\Lambda$ with $\rho=(\Lambda\otimes {\rm Id})(|\Omega_\sigma\rangle\langle\Omega_\sigma|)$. This completes the proof.

\section{The derivation of the LHS from the joint Gaussian measurement}\label{App:LHS}

Here we show that our joint Gaussian POVM (discussed in the main text) is consistent with the LHS of \cite{WiJoDo07}. According to the general discussion in Section \ref{subsec:prel_hidden}, joint POVM $G_{\boldsymbol\lambda}$ and the LHS $\sigma_{\boldsymbol\lambda}$ are related by
$\sigma_{\boldsymbol \lambda}=\sigma^{\frac 12}G_{\boldsymbol \lambda} \sigma^{\frac 12}= {\rm tr}_A[G_{\boldsymbol\lambda} \otimes \id |\Omega_\sigma\rangle\langle \Omega_\sigma|]$. Now $\sigma_{\boldsymbol \lambda}$ has finite trace, and $\tilde\sigma_{\boldsymbol\lambda} := \sigma_{\boldsymbol\lambda}/{\rm tr}[\sigma_{\boldsymbol\lambda}]$ is an actual state; we show that it is Gaussian by computing the characteristic function $\widehat{\tilde{\sigma}_{\boldsymbol \lambda}}({\bf x}) := {\rm tr}[W({\bf x})\tilde \sigma_{\boldsymbol\lambda}]=f_{\bf x}({\boldsymbol \lambda})/f_{{\bf 0}}({\boldsymbol \lambda})$, where $f_{\bf x}(\boldsymbol \lambda):= {\rm tr}[W({\bf x})\sigma_{\boldsymbol\lambda}] = {\rm tr}[G_{\boldsymbol\lambda} \otimes W({\bf x})|\Omega_\sigma\rangle \langle \Omega_\sigma|]$. For simplicity we assume ${\bf c}=0$. Due to \eqref{eqn:gaussian_obs}, the function $f_{\bf x}$ is determined via its Fourier transform, in terms of the channel parameters $({\bf M,N,c})$. For simplicity, we will assume ${\bf c}=0$, and compute
\begin{align}
\nonumber\widehat{f}_{\bf x}({\bf p})
\nonumber &= \int e^{i{\bf p}^T{\boldsymbol\lambda}} \,{\rm tr}[G_{\boldsymbol \lambda} \otimes W({\bf x})|\Omega_\sigma\rangle \langle \Omega_\sigma|]\, d{\boldsymbol \lambda}\\
\nonumber &= {\rm tr}[\hat G({\bf p}) \otimes W({\bf x})|\Omega_\sigma\rangle \langle \Omega_\sigma|]\\
&= {\rm tr}[W({\bf Mp}) \otimes W({\bf x})|\Omega_\sigma\rangle \langle \Omega_\sigma|]e^{-\frac 14 {\bf p}^T{\bf Np}}.
\end{align}
Now by definition, the first factor in the last expression is the characteristic function of the state $\Omega_\sigma$, evaluated at ${\bf Mp}\oplus {\bf x}$; hence by \eqref{covdef} and \eqref{CMrho} we get
\begin{align}
\nonumber
\widehat{f}_{\bf x}({\bf p}) &= e^{-\frac 14 (({\bf Mp})^T\oplus {\bf x}^T){\bf V}_{\Omega_\sigma}({\bf Mp}\oplus {\bf x}) }e^{-\frac 14 {\bf p}^T{\bf Np}}\\
\nonumber &= e^{-\frac 14 ({\bf p}^T\oplus {\bf x}^T){\bf V}({\bf p}\oplus {\bf x}) }= e^{-\frac 14({\bf p}^T{\bf V}_A{\bf p} +2 {\bf p}^T{\bf \Gamma}^T{\bf x} + {\bf x}^T{\bf V}_\sigma{\bf x})}\\
&= e^{-\frac 14({\bf p-\boldsymbol \mu_{\bf x}})^T{\bf V}_A({\bf p-\boldsymbol\mu_{\bf x}})}\,\,e^{ - \frac  14 {\bf x}^T({\bf V}_\sigma- \boldsymbol \Gamma {\bf V}_A^{-1}\boldsymbol \Gamma^T){\bf x}}
\end{align}
where $\boldsymbol \mu_{\bf x} = -{\bf V}_A^{-1}{\bf \Gamma}^T{\bf x}$, and we have used the notation \eqref{genrho}. Taking the inverse Fourier transform we obtain
\begin{equation}
f_{\bf x}({\boldsymbol \lambda}) = C e^{-{\boldsymbol\lambda}^T{\bf V}_A^{-1}{\boldsymbol\lambda} -i{\boldsymbol\lambda}^T{\boldsymbol \mu}_{\bf x}}\,\,e^{ - \frac  14 {\bf x}^T({\bf V}_\sigma- \boldsymbol \Gamma {\bf V}_A^{-1}\boldsymbol \Gamma^T){\bf x}}
\end{equation}
where $C$ depends only on ${\bf V}_A$. Hence
$\widehat{\tilde{\sigma}_{\boldsymbol \lambda}}({\bf x}) = f_{\bf x}({\boldsymbol \lambda})/f_{{\bf 0}}({\boldsymbol \lambda})=e^{ - \frac  14 {\bf x}^T({\bf V}_\sigma- \boldsymbol \Gamma {\bf V}_A^{-1}\boldsymbol \Gamma^T){\bf x}  +i({\bf \Gamma} {\bf V}^{-1}_A{\boldsymbol\lambda})^T{\bf x}}$, so by \eqref{covdef}, $\tilde \sigma_{\boldsymbol \lambda}$ is Gaussian with CM and displacement
\begin{align}
{\bf V}_{\boldsymbol \lambda} &= {\bf V}_\sigma- \boldsymbol \Gamma {\bf V}_A^{-1}\boldsymbol \Gamma^T, & {\bf r}_{\boldsymbol \lambda } &= -{\bf \Gamma}{\bf V}_A^{-1}{\boldsymbol\lambda}.
\end{align}
Furthermore, each $\tilde \sigma_\lambda$ occurs in the LHS decomposition with Gaussian probability $p_{\boldsymbol \lambda} = {\rm tr}[\sigma_{\boldsymbol \lambda}]=f_{\bf 0}({\boldsymbol \lambda}) \propto e^{-{\boldsymbol\lambda}^T{\bf V}_A^{-1}{\boldsymbol \lambda}}$. By changing the hidden variable $\boldsymbol \lambda$ to ${\bf r}_{\boldsymbol \lambda}$ we recover exactly the LHS of \cite{WiJoDo07}.

%\bibliography{st_ch}{}

%merlin.mbs apsrev4-1.bst 2010-07-25 4.21a (PWD, AO, DPC) hacked
%Control: key (0)
%Control: author (8) initials jnrlst
%Control: editor formatted (1) identically to author
%Control: production of article title (-1) disabled
%Control: page (0) single
%Control: year (1) truncated
%Control: production of eprint (0) enabled
\begin{thebibliography}{57}%
\makeatletter
\providecommand \@ifxundefined [1]{%
 \@ifx{#1\undefined}
}%
\providecommand \@ifnum [1]{%
 \ifnum #1\expandafter \@firstoftwo
 \else \expandafter \@secondoftwo
 \fi
}%
\providecommand \@ifx [1]{%
 \ifx #1\expandafter \@firstoftwo
 \else \expandafter \@secondoftwo
 \fi
}%
\providecommand \natexlab [1]{#1}%
\providecommand \enquote  [1]{``#1''}%
\providecommand \bibnamefont  [1]{#1}%
\providecommand \bibfnamefont [1]{#1}%
\providecommand \citenamefont [1]{#1}%
\providecommand \href@noop [0]{\@secondoftwo}%
\providecommand \href [0]{\begingroup \@sanitize@url \@href}%
\providecommand \@href[1]{\@@startlink{#1}\@@href}%
\providecommand \@@href[1]{\endgroup#1\@@endlink}%
\providecommand \@sanitize@url [0]{\catcode `\\12\catcode `\$12\catcode
  `\&12\catcode `\#12\catcode `\^12\catcode `\_12\catcode `\%12\relax}%
\providecommand \@@startlink[1]{}%
\providecommand \@@endlink[0]{}%
\providecommand \url  [0]{\begingroup\@sanitize@url \@url }%
\providecommand \@url [1]{\endgroup\@href {#1}{\urlprefix }}%
\providecommand \urlprefix  [0]{URL }%
\providecommand \Eprint [0]{\href }%
\providecommand \doibase [0]{http://dx.doi.org/}%
\providecommand \selectlanguage [0]{\@gobble}%
\providecommand \bibinfo  [0]{\@secondoftwo}%
\providecommand \bibfield  [0]{\@secondoftwo}%
\providecommand \translation [1]{[#1]}%
\providecommand \BibitemOpen [0]{}%
\providecommand \bibitemStop [0]{}%
\providecommand \bibitemNoStop [0]{.\EOS\space}%
\providecommand \EOS [0]{\spacefactor3000\relax}%
\providecommand \BibitemShut  [1]{\csname bibitem#1\endcsname}%
\let\auto@bib@innerbib\@empty
%</preamble>
\bibitem [{\citenamefont {Wiseman}\ \emph {et~al.}(2007)\citenamefont
  {Wiseman}, \citenamefont {Jones},\ and\ \citenamefont {Doherty}}]{WiJoDo07}%
  \BibitemOpen
  \bibfield  {author} {\bibinfo {author} {\bibfnamefont {H.~M.}\ \bibnamefont
  {Wiseman}}, \bibinfo {author} {\bibfnamefont {S.~J.}\ \bibnamefont {Jones}},
  \ and\ \bibinfo {author} {\bibfnamefont {A.~C.}\ \bibnamefont {Doherty}},\
  }\href {\doibase 10.1103/PhysRevLett.98.140402} {\bibfield  {journal}
  {\bibinfo  {journal} {Phys. Rev. Lett.}\ }\textbf {\bibinfo {volume} {98}},\
  \bibinfo {pages} {140402} (\bibinfo {year} {2007})}\BibitemShut {NoStop}%
\bibitem [{\citenamefont {Bell}(1964)}]{Bell64}%
  \BibitemOpen
  \bibfield  {author} {\bibinfo {author} {\bibfnamefont {J.~S.}\ \bibnamefont
  {Bell}},\ }\href@noop {} {\bibfield  {journal} {\bibinfo  {journal}
  {Physics}\ }\textbf {\bibinfo {volume} {1}},\ \bibinfo {pages} {195}
  (\bibinfo {year} {1964})}\BibitemShut {NoStop}%
\bibitem [{\citenamefont {Pusey}(2013)}]{Pu13}%
  \BibitemOpen
  \bibfield  {author} {\bibinfo {author} {\bibfnamefont {M.~F.}\ \bibnamefont
  {Pusey}},\ }\href {\doibase 10.1103/PhysRevA.88.032313} {\bibfield  {journal}
  {\bibinfo  {journal} {Phys. Rev. A}\ }\textbf {\bibinfo {volume} {88}},\
  \bibinfo {pages} {032313} (\bibinfo {year} {2013})}\BibitemShut {NoStop}%
\bibitem [{\citenamefont {Moroder}\ \emph {et~al.}(2014)\citenamefont
  {Moroder}, \citenamefont {Gittsovich}, \citenamefont {Huber},\ and\
  \citenamefont {G\"uhne}}]{Moro14}%
  \BibitemOpen
  \bibfield  {author} {\bibinfo {author} {\bibfnamefont {T.}~\bibnamefont
  {Moroder}}, \bibinfo {author} {\bibfnamefont {O.}~\bibnamefont {Gittsovich}},
  \bibinfo {author} {\bibfnamefont {M.}~\bibnamefont {Huber}}, \ and\ \bibinfo
  {author} {\bibfnamefont {O.}~\bibnamefont {G\"uhne}},\ }\href {\doibase
  10.1103/PhysRevLett.113.050404} {\bibfield  {journal} {\bibinfo  {journal}
  {Phys. Rev. Lett.}\ }\textbf {\bibinfo {volume} {113}},\ \bibinfo {pages}
  {050404} (\bibinfo {year} {2014})}\BibitemShut {NoStop}%
\bibitem [{\citenamefont {V{\'e}rtesi}\ and\ \citenamefont
  {Brunner}(2014)}]{VeBr14}%
  \BibitemOpen
  \bibfield  {author} {\bibinfo {author} {\bibfnamefont {T.}~\bibnamefont
  {V{\'e}rtesi}}\ and\ \bibinfo {author} {\bibfnamefont {N.}~\bibnamefont
  {Brunner}},\ }\href {\doibase 10.1038/ncomms6297} {\bibfield  {journal}
  {\bibinfo  {journal} {Nat. Comm. 5, 5297 (2014)}\ }\textbf {\bibinfo {volume}
  {5}},\ \bibinfo {pages} {5297} (\bibinfo {year} {2014})}\BibitemShut
  {NoStop}%
\bibitem [{\citenamefont {Hirsch}\ \emph {et~al.}(2016)\citenamefont {Hirsch},
  \citenamefont {Quintino}, \citenamefont {V\'ertesi}, \citenamefont {Pusey},\
  and\ \citenamefont {Brunner}}]{HiQu16}%
  \BibitemOpen
  \bibfield  {author} {\bibinfo {author} {\bibfnamefont {F.}~\bibnamefont
  {Hirsch}}, \bibinfo {author} {\bibfnamefont {M.~T.}\ \bibnamefont
  {Quintino}}, \bibinfo {author} {\bibfnamefont {T.}~\bibnamefont {V\'ertesi}},
  \bibinfo {author} {\bibfnamefont {M.~F.}\ \bibnamefont {Pusey}}, \ and\
  \bibinfo {author} {\bibfnamefont {N.}~\bibnamefont {Brunner}},\ }\href
  {\doibase 10.1103/PhysRevLett.117.190402} {\bibfield  {journal} {\bibinfo
  {journal} {Phys. Rev. Lett.}\ }\textbf {\bibinfo {volume} {117}},\ \bibinfo
  {pages} {190402} (\bibinfo {year} {2016})}\BibitemShut {NoStop}%
\bibitem [{\citenamefont {Cavalcanti}\ \emph {et~al.}(2016)\citenamefont
  {Cavalcanti}, \citenamefont {Guerini}, \citenamefont {Rabelo},\ and\
  \citenamefont {Skrzypczyk}}]{CaGu16}%
  \BibitemOpen
  \bibfield  {author} {\bibinfo {author} {\bibfnamefont {D.}~\bibnamefont
  {Cavalcanti}}, \bibinfo {author} {\bibfnamefont {L.}~\bibnamefont {Guerini}},
  \bibinfo {author} {\bibfnamefont {R.}~\bibnamefont {Rabelo}}, \ and\ \bibinfo
  {author} {\bibfnamefont {P.}~\bibnamefont {Skrzypczyk}},\ }\href {\doibase
  10.1103/PhysRevLett.117.190401} {\bibfield  {journal} {\bibinfo  {journal}
  {Phys. Rev. Lett.}\ }\textbf {\bibinfo {volume} {117}},\ \bibinfo {pages}
  {190401} (\bibinfo {year} {2016})}\BibitemShut {NoStop}%
\bibitem [{\citenamefont {Branciard}\ \emph {et~al.}(2012)\citenamefont
  {Branciard}, \citenamefont {Cavalcanti}, \citenamefont {Walborn},
  \citenamefont {Scarani},\ and\ \citenamefont {Wiseman}}]{BrCa12}%
  \BibitemOpen
  \bibfield  {author} {\bibinfo {author} {\bibfnamefont {C.}~\bibnamefont
  {Branciard}}, \bibinfo {author} {\bibfnamefont {E.~G.}\ \bibnamefont
  {Cavalcanti}}, \bibinfo {author} {\bibfnamefont {S.~P.}\ \bibnamefont
  {Walborn}}, \bibinfo {author} {\bibfnamefont {V.}~\bibnamefont {Scarani}}, \
  and\ \bibinfo {author} {\bibfnamefont {H.~M.}\ \bibnamefont {Wiseman}},\
  }\href {\doibase 10.1103/PhysRevA.85.010301} {\bibfield  {journal} {\bibinfo
  {journal} {Phys. Rev. A}\ }\textbf {\bibinfo {volume} {85}},\ \bibinfo
  {pages} {010301} (\bibinfo {year} {2012})}\BibitemShut {NoStop}%
\bibitem [{\citenamefont {Kogias}\ \emph {et~al.}(2017)\citenamefont {Kogias},
  \citenamefont {Xiang}, \citenamefont {He},\ and\ \citenamefont
  {Adesso}}]{KoXi17}%
  \BibitemOpen
  \bibfield  {author} {\bibinfo {author} {\bibfnamefont {I.}~\bibnamefont
  {Kogias}}, \bibinfo {author} {\bibfnamefont {Y.}~\bibnamefont {Xiang}},
  \bibinfo {author} {\bibfnamefont {Q.}~\bibnamefont {He}}, \ and\ \bibinfo
  {author} {\bibfnamefont {G.}~\bibnamefont {Adesso}},\ }\href {\doibase
  10.1103/PhysRevA.95.012315} {\bibfield  {journal} {\bibinfo  {journal} {Phys.
  Rev. A}\ }\textbf {\bibinfo {volume} {95}},\ \bibinfo {pages} {012315}
  (\bibinfo {year} {2017})}\BibitemShut {NoStop}%
\bibitem [{\citenamefont {Xiang}\ \emph {et~al.}(2017)\citenamefont {Xiang},
  \citenamefont {Kogias}, \citenamefont {Adesso},\ and\ \citenamefont
  {He}}]{XiKo17}%
  \BibitemOpen
  \bibfield  {author} {\bibinfo {author} {\bibfnamefont {Y.}~\bibnamefont
  {Xiang}}, \bibinfo {author} {\bibfnamefont {I.}~\bibnamefont {Kogias}},
  \bibinfo {author} {\bibfnamefont {G.}~\bibnamefont {Adesso}}, \ and\ \bibinfo
  {author} {\bibfnamefont {Q.}~\bibnamefont {He}},\ }\href {\doibase
  10.1103/PhysRevA.95.010101} {\bibfield  {journal} {\bibinfo  {journal} {Phys.
  Rev. A}\ }\textbf {\bibinfo {volume} {95}},\ \bibinfo {pages} {010101}
  (\bibinfo {year} {2017})}\BibitemShut {NoStop}%
\bibitem [{\citenamefont {Boyd}\ and\ \citenamefont
  {Vandenberghe}(2004)}]{Boydbook}%
  \BibitemOpen
  \bibfield  {author} {\bibinfo {author} {\bibfnamefont {S.}~\bibnamefont
  {Boyd}}\ and\ \bibinfo {author} {\bibfnamefont {L.}~\bibnamefont
  {Vandenberghe}},\ }\href@noop {} {\emph {\bibinfo {title} {Convex
  Optimization}}},\ \bibinfo {edition} {1st}\ ed.\ (\bibinfo  {publisher}
  {Cambridge University Press},\ \bibinfo {year} {2004})\BibitemShut {NoStop}%
\bibitem [{\citenamefont {Skrzypczyk}\ \emph {et~al.}(2014)\citenamefont
  {Skrzypczyk}, \citenamefont {Navascu\'es},\ and\ \citenamefont
  {Cavalcanti}}]{SW_skr}%
  \BibitemOpen
  \bibfield  {author} {\bibinfo {author} {\bibfnamefont {P.}~\bibnamefont
  {Skrzypczyk}}, \bibinfo {author} {\bibfnamefont {M.}~\bibnamefont
  {Navascu\'es}}, \ and\ \bibinfo {author} {\bibfnamefont {D.}~\bibnamefont
  {Cavalcanti}},\ }\href {\doibase 10.1103/PhysRevLett.112.180404} {\bibfield
  {journal} {\bibinfo  {journal} {Phys. Rev. Lett.}\ }\textbf {\bibinfo
  {volume} {112}},\ \bibinfo {pages} {180404} (\bibinfo {year}
  {2014})}\BibitemShut {NoStop}%
\bibitem [{\citenamefont {Piani}\ and\ \citenamefont {Watrous}(2015)}]{wp}%
  \BibitemOpen
  \bibfield  {author} {\bibinfo {author} {\bibfnamefont {M.}~\bibnamefont
  {Piani}}\ and\ \bibinfo {author} {\bibfnamefont {J.}~\bibnamefont
  {Watrous}},\ }\href {\doibase 10.1103/PhysRevLett.114.060404} {\bibfield
  {journal} {\bibinfo  {journal} {Phys. Rev. Lett.}\ }\textbf {\bibinfo
  {volume} {114}},\ \bibinfo {pages} {060404} (\bibinfo {year}
  {2015})}\BibitemShut {NoStop}%
\bibitem [{\citenamefont {Cavalcanti}\ and\ \citenamefont
  {Skrzypczyk}(2017)}]{CaSk16b}%
  \BibitemOpen
  \bibfield  {author} {\bibinfo {author} {\bibfnamefont {D.}~\bibnamefont
  {Cavalcanti}}\ and\ \bibinfo {author} {\bibfnamefont {P.}~\bibnamefont
  {Skrzypczyk}},\ }\href {http://stacks.iop.org/0034-4885/80/i=2/a=024001}
  {\bibfield  {journal} {\bibinfo  {journal} {Reports on Progress in Physics}\
  }\textbf {\bibinfo {volume} {80}},\ \bibinfo {pages} {024001} (\bibinfo
  {year} {2017})}\BibitemShut {NoStop}%
\bibitem [{\citenamefont {Reid}(1989)}]{Reid89}%
  \BibitemOpen
  \bibfield  {author} {\bibinfo {author} {\bibfnamefont {M.~D.}\ \bibnamefont
  {Reid}},\ }\href {\doibase 10.1103/PhysRevA.40.913} {\bibfield  {journal}
  {\bibinfo  {journal} {Phys. Rev. A}\ }\textbf {\bibinfo {volume} {40}},\
  \bibinfo {pages} {913} (\bibinfo {year} {1989})}\BibitemShut {NoStop}%
\bibitem [{\citenamefont {Cavalcanti}\ \emph {et~al.}(2009)\citenamefont
  {Cavalcanti}, \citenamefont {Jones}, \citenamefont {Wiseman},\ and\
  \citenamefont {Reid}}]{Cav09}%
  \BibitemOpen
  \bibfield  {author} {\bibinfo {author} {\bibfnamefont {E.~G.}\ \bibnamefont
  {Cavalcanti}}, \bibinfo {author} {\bibfnamefont {S.~J.}\ \bibnamefont
  {Jones}}, \bibinfo {author} {\bibfnamefont {H.~M.}\ \bibnamefont {Wiseman}},
  \ and\ \bibinfo {author} {\bibfnamefont {M.~D.}\ \bibnamefont {Reid}},\
  }\href {\doibase 10.1103/PhysRevA.80.032112} {\bibfield  {journal} {\bibinfo
  {journal} {Phys. Rev. A}\ }\textbf {\bibinfo {volume} {80}},\ \bibinfo
  {pages} {032112} (\bibinfo {year} {2009})}\BibitemShut {NoStop}%
\bibitem [{\citenamefont {Schneeloch}\ \emph {et~al.}(2013)\citenamefont
  {Schneeloch}, \citenamefont {Dixon}, \citenamefont {Howland}, \citenamefont
  {Broadbent},\ and\ \citenamefont {Howell}}]{Schnee13}%
  \BibitemOpen
  \bibfield  {author} {\bibinfo {author} {\bibfnamefont {J.}~\bibnamefont
  {Schneeloch}}, \bibinfo {author} {\bibfnamefont {P.~B.}\ \bibnamefont
  {Dixon}}, \bibinfo {author} {\bibfnamefont {G.~A.}\ \bibnamefont {Howland}},
  \bibinfo {author} {\bibfnamefont {C.~J.}\ \bibnamefont {Broadbent}}, \ and\
  \bibinfo {author} {\bibfnamefont {J.~C.}\ \bibnamefont {Howell}},\ }\href
  {\doibase 10.1103/PhysRevLett.110.130407} {\bibfield  {journal} {\bibinfo
  {journal} {Phys. Rev. Lett.}\ }\textbf {\bibinfo {volume} {110}},\ \bibinfo
  {pages} {130407} (\bibinfo {year} {2013})}\BibitemShut {NoStop}%
\bibitem [{\citenamefont {Kogias}\ \emph
  {et~al.}(2015{\natexlab{a}})\citenamefont {Kogias}, \citenamefont {Lee},
  \citenamefont {Ragy},\ and\ \citenamefont {Adesso}}]{KoLe15}%
  \BibitemOpen
  \bibfield  {author} {\bibinfo {author} {\bibfnamefont {I.}~\bibnamefont
  {Kogias}}, \bibinfo {author} {\bibfnamefont {A.~R.}\ \bibnamefont {Lee}},
  \bibinfo {author} {\bibfnamefont {S.}~\bibnamefont {Ragy}}, \ and\ \bibinfo
  {author} {\bibfnamefont {G.}~\bibnamefont {Adesso}},\ }\href {\doibase
  10.1103/PhysRevLett.114.060403} {\bibfield  {journal} {\bibinfo  {journal}
  {Phys. Rev. Lett.}\ }\textbf {\bibinfo {volume} {114}},\ \bibinfo {pages}
  {060403} (\bibinfo {year} {2015}{\natexlab{a}})}\BibitemShut {NoStop}%
\bibitem [{\citenamefont {Kogias}\ \emph
  {et~al.}(2015{\natexlab{b}})\citenamefont {Kogias}, \citenamefont
  {Skrzypczyk}, \citenamefont {Cavalcanti}, \citenamefont {Ac\'{\i}n},\ and\
  \citenamefont {Adesso}}]{KoSk15}%
  \BibitemOpen
  \bibfield  {author} {\bibinfo {author} {\bibfnamefont {I.}~\bibnamefont
  {Kogias}}, \bibinfo {author} {\bibfnamefont {P.}~\bibnamefont {Skrzypczyk}},
  \bibinfo {author} {\bibfnamefont {D.}~\bibnamefont {Cavalcanti}}, \bibinfo
  {author} {\bibfnamefont {A.}~\bibnamefont {Ac\'{\i}n}}, \ and\ \bibinfo
  {author} {\bibfnamefont {G.}~\bibnamefont {Adesso}},\ }\href {\doibase
  10.1103/PhysRevLett.115.210401} {\bibfield  {journal} {\bibinfo  {journal}
  {Phys. Rev. Lett.}\ }\textbf {\bibinfo {volume} {115}},\ \bibinfo {pages}
  {210401} (\bibinfo {year} {2015}{\natexlab{b}})}\BibitemShut {NoStop}%
\bibitem [{\citenamefont {Choi}(1975)}]{Choi75}%
  \BibitemOpen
  \bibfield  {author} {\bibinfo {author} {\bibfnamefont {M.-D.}\ \bibnamefont
  {Choi}},\ }\href {\doibase http://dx.doi.org/10.1016/0024-3795(75)90075-0}
  {\bibfield  {journal} {\bibinfo  {journal} {Linear Algebra and its
  Applications}\ }\textbf {\bibinfo {volume} {10}},\ \bibinfo {pages} {285 }
  (\bibinfo {year} {1975})}\BibitemShut {NoStop}%
\bibitem [{\citenamefont {Jamio\l{}kowski}(1972)}]{Jam72}%
  \BibitemOpen
  \bibfield  {author} {\bibinfo {author} {\bibfnamefont {A.}~\bibnamefont
  {Jamio\l{}kowski}},\ }\href {\doibase
  http://dx.doi.org/10.1016/0034-4877(72)90011-0} {\bibfield  {journal}
  {\bibinfo  {journal} {Reports on Mathematical Physics}\ }\textbf {\bibinfo
  {volume} {3}},\ \bibinfo {pages} {275 } (\bibinfo {year} {1972})}\BibitemShut
  {NoStop}%
\bibitem [{\citenamefont {Holevo}(2011)}]{Ho11}%
  \BibitemOpen
  \bibfield  {author} {\bibinfo {author} {\bibfnamefont {A.~S.}\ \bibnamefont
  {Holevo}},\ }\href {\doibase 10.1063/1.3581879} {\bibfield  {journal}
  {\bibinfo  {journal} {Journal of Mathematical Physics}\ }\textbf {\bibinfo
  {volume} {52}},\ \bibinfo {pages} {042202} (\bibinfo {year}
  {2011})}\BibitemShut {NoStop}%
\bibitem [{\citenamefont {Heinosaari}\ \emph
  {et~al.}(2015{\natexlab{a}})\citenamefont {Heinosaari}, \citenamefont
  {Kiukas}, \citenamefont {Reitzner},\ and\ \citenamefont {Schultz}}]{ibc}%
  \BibitemOpen
  \bibfield  {author} {\bibinfo {author} {\bibfnamefont {T.}~\bibnamefont
  {Heinosaari}}, \bibinfo {author} {\bibfnamefont {J.}~\bibnamefont {Kiukas}},
  \bibinfo {author} {\bibfnamefont {D.}~\bibnamefont {Reitzner}}, \ and\
  \bibinfo {author} {\bibfnamefont {J.}~\bibnamefont {Schultz}},\ }\href
  {\doibase 10.1088/1751-8113/48/43/435301} {\bibfield  {journal} {\bibinfo
  {journal} {Journal of Physics A: Mathematical and Theoretical}\ }\textbf
  {\bibinfo {volume} {48}},\ \bibinfo {pages} {435301} (\bibinfo {year}
  {2015}{\natexlab{a}})}\BibitemShut {NoStop}%
\bibitem [{\citenamefont {Uola}\ \emph {et~al.}(2014)\citenamefont {Uola},
  \citenamefont {Moroder},\ and\ \citenamefont {G\"uhne}}]{UoMoGu14}%
  \BibitemOpen
  \bibfield  {author} {\bibinfo {author} {\bibfnamefont {R.}~\bibnamefont
  {Uola}}, \bibinfo {author} {\bibfnamefont {T.}~\bibnamefont {Moroder}}, \
  and\ \bibinfo {author} {\bibfnamefont {O.}~\bibnamefont {G\"uhne}},\ }\href
  {\doibase 10.1103/PhysRevLett.113.160403} {\bibfield  {journal} {\bibinfo
  {journal} {Phys. Rev. Lett.}\ }\textbf {\bibinfo {volume} {113}},\ \bibinfo
  {pages} {160403} (\bibinfo {year} {2014})}\BibitemShut {NoStop}%
\bibitem [{\citenamefont {Quintino}\ \emph {et~al.}(2014)\citenamefont
  {Quintino}, \citenamefont {V\'ertesi},\ and\ \citenamefont
  {Brunner}}]{QuVeBr14}%
  \BibitemOpen
  \bibfield  {author} {\bibinfo {author} {\bibfnamefont {M.~T.}\ \bibnamefont
  {Quintino}}, \bibinfo {author} {\bibfnamefont {T.}~\bibnamefont {V\'ertesi}},
  \ and\ \bibinfo {author} {\bibfnamefont {N.}~\bibnamefont {Brunner}},\ }\href
  {\doibase 10.1103/PhysRevLett.113.160402} {\bibfield  {journal} {\bibinfo
  {journal} {Phys. Rev. Lett.}\ }\textbf {\bibinfo {volume} {113}},\ \bibinfo
  {pages} {160402} (\bibinfo {year} {2014})}\BibitemShut {NoStop}%
\bibitem [{\citenamefont {Uola}\ \emph {et~al.}(2015)\citenamefont {Uola},
  \citenamefont {Budroni}, \citenamefont {G\"uhne},\ and\ \citenamefont
  {Pellonp\"a\"a}}]{UoBu15}%
  \BibitemOpen
  \bibfield  {author} {\bibinfo {author} {\bibfnamefont {R.}~\bibnamefont
  {Uola}}, \bibinfo {author} {\bibfnamefont {C.}~\bibnamefont {Budroni}},
  \bibinfo {author} {\bibfnamefont {O.}~\bibnamefont {G\"uhne}}, \ and\
  \bibinfo {author} {\bibfnamefont {J.-P.}\ \bibnamefont {Pellonp\"a\"a}},\
  }\href {\doibase 10.1103/PhysRevLett.115.230402} {\bibfield  {journal}
  {\bibinfo  {journal} {Phys. Rev. Lett.}\ }\textbf {\bibinfo {volume} {115}},\
  \bibinfo {pages} {230402} (\bibinfo {year} {2015})}\BibitemShut {NoStop}%
\bibitem [{\citenamefont {Addis}\ \emph {et~al.}(2016)\citenamefont {Addis},
  \citenamefont {Heinosaari}, \citenamefont {Kiukas}, \citenamefont {Laine},\
  and\ \citenamefont {Maniscalco}}]{AdHe15}%
  \BibitemOpen
  \bibfield  {author} {\bibinfo {author} {\bibfnamefont {C.}~\bibnamefont
  {Addis}}, \bibinfo {author} {\bibfnamefont {T.}~\bibnamefont {Heinosaari}},
  \bibinfo {author} {\bibfnamefont {J.}~\bibnamefont {Kiukas}}, \bibinfo
  {author} {\bibfnamefont {E.-M.}\ \bibnamefont {Laine}}, \ and\ \bibinfo
  {author} {\bibfnamefont {S.}~\bibnamefont {Maniscalco}},\ }\href {\doibase
  10.1103/PhysRevA.93.022114} {\bibfield  {journal} {\bibinfo  {journal} {Phys.
  Rev. A}\ }\textbf {\bibinfo {volume} {93}},\ \bibinfo {pages} {022114}
  (\bibinfo {year} {2016})}\BibitemShut {NoStop}%
\bibitem [{\citenamefont {Kiukas}\ and\ \citenamefont
  {Burgarth}(2016)}]{BuKi16}%
  \BibitemOpen
  \bibfield  {author} {\bibinfo {author} {\bibfnamefont {J.}~\bibnamefont
  {Kiukas}}\ and\ \bibinfo {author} {\bibfnamefont {D.}~\bibnamefont
  {Burgarth}},\ }\href {\doibase 10.1103/PhysRevA.93.032107} {\bibfield
  {journal} {\bibinfo  {journal} {Phys. Rev. A}\ }\textbf {\bibinfo {volume}
  {93}},\ \bibinfo {pages} {032107} (\bibinfo {year} {2016})}\BibitemShut
  {NoStop}%
\bibitem [{\citenamefont {Heinosaari}\ \emph
  {et~al.}(2015{\natexlab{b}})\citenamefont {Heinosaari}, \citenamefont
  {Kiukas},\ and\ \citenamefont {Schultz}}]{gibc}%
  \BibitemOpen
  \bibfield  {author} {\bibinfo {author} {\bibfnamefont {T.}~\bibnamefont
  {Heinosaari}}, \bibinfo {author} {\bibfnamefont {J.}~\bibnamefont {Kiukas}},
  \ and\ \bibinfo {author} {\bibfnamefont {J.}~\bibnamefont {Schultz}},\ }\href
  {\doibase 10.1063/1.4928044} {\bibfield  {journal} {\bibinfo  {journal}
  {Journal of Mathematical Physics}\ }\textbf {\bibinfo {volume} {56}},\
  \bibinfo {pages} {082202} (\bibinfo {year} {2015}{\natexlab{b}})}\BibitemShut
  {NoStop}%
\bibitem [{\citenamefont {Busch}\ \emph {et~al.}(1991)\citenamefont {Busch},
  \citenamefont {Lahti},\ and\ \citenamefont {Mittelstaedt}}]{BuLaMi96}%
  \BibitemOpen
  \bibfield  {author} {\bibinfo {author} {\bibfnamefont {P.}~\bibnamefont
  {Busch}}, \bibinfo {author} {\bibfnamefont {P.~J.}\ \bibnamefont {Lahti}}, \
  and\ \bibinfo {author} {\bibfnamefont {P.}~\bibnamefont {Mittelstaedt}},\
  }in\ \href@noop {} {\emph {\bibinfo {booktitle} {The Quantum Theory of
  Measurement}}}\ (\bibinfo  {publisher} {Springer},\ \bibinfo {year} {1991})\
  pp.\ \bibinfo {pages} {27--98}\BibitemShut {NoStop}%
\bibitem [{\citenamefont {A}\ \emph {et~al.}(2008)\citenamefont {A},
  \citenamefont {Pulmannov\'a},\ and\ \citenamefont {Vincekov\'a}}]{Je08}%
  \BibitemOpen
  \bibfield  {author} {\bibinfo {author} {\bibfnamefont {J.}~\bibnamefont {A}},
  \bibinfo {author} {\bibfnamefont {S.}~\bibnamefont {Pulmannov\'a}}, \ and\
  \bibinfo {author} {\bibfnamefont {E.}~\bibnamefont {Vincekov\'a}},\
  }\href@noop {} {\bibfield  {journal} {\bibinfo  {journal} {Int. J. Theor.
  Phys.}\ }\textbf {\bibinfo {volume} {47}},\ \bibinfo {pages} {125} (\bibinfo
  {year} {2008})}\BibitemShut {NoStop}%
\bibitem [{\citenamefont {Horodecki}\ \emph {et~al.}(2003)\citenamefont
  {Horodecki}, \citenamefont {Shor},\ and\ \citenamefont {Ruskai}}]{EBC}%
  \BibitemOpen
  \bibfield  {author} {\bibinfo {author} {\bibfnamefont {M.}~\bibnamefont
  {Horodecki}}, \bibinfo {author} {\bibfnamefont {P.~W.}\ \bibnamefont {Shor}},
  \ and\ \bibinfo {author} {\bibfnamefont {M.~B.}\ \bibnamefont {Ruskai}},\
  }\href {\doibase 10.1142/S0129055X03001709} {\bibfield  {journal} {\bibinfo
  {journal} {Reviews in Mathematical Physics}\ }\textbf {\bibinfo {volume}
  {15}},\ \bibinfo {pages} {629} (\bibinfo {year} {2003})}\BibitemShut
  {NoStop}%
\bibitem [{\citenamefont {Diestel}\ and\ \citenamefont
  {Uhl~Jr.}(1977)}]{diestel}%
  \BibitemOpen
  \bibfield  {author} {\bibinfo {author} {\bibfnamefont {J.}~\bibnamefont
  {Diestel}}\ and\ \bibinfo {author} {\bibfnamefont {J.}~\bibnamefont
  {Uhl~Jr.}},\ }\href@noop {} {\emph {\bibinfo {title} {Vector Measures}}}\
  (\bibinfo  {publisher} {Math. Surveys, vol. 15, Amer. Math. Soc., Providence,
  RI},\ \bibinfo {year} {1977})\BibitemShut {NoStop}%
\bibitem [{\citenamefont {Cavalcanti}\ and\ \citenamefont
  {Skrzypczyk}(2016)}]{CavCSR}%
  \BibitemOpen
  \bibfield  {author} {\bibinfo {author} {\bibfnamefont {D.}~\bibnamefont
  {Cavalcanti}}\ and\ \bibinfo {author} {\bibfnamefont {P.}~\bibnamefont
  {Skrzypczyk}},\ }\href {\doibase 10.1103/PhysRevA.93.052112} {\bibfield
  {journal} {\bibinfo  {journal} {Phys. Rev. A}\ }\textbf {\bibinfo {volume}
  {93}},\ \bibinfo {pages} {052112} (\bibinfo {year} {2016})}\BibitemShut
  {NoStop}%
\bibitem [{\citenamefont {Piani}(2015)}]{PianiCS}%
  \BibitemOpen
  \bibfield  {author} {\bibinfo {author} {\bibfnamefont {M.}~\bibnamefont
  {Piani}},\ }\href {\doibase 10.1364/JOSAB.32.0000A1} {\bibfield  {journal}
  {\bibinfo  {journal} {J. Opt. Soc. Am. B}\ }\textbf {\bibinfo {volume}
  {32}},\ \bibinfo {pages} {A1} (\bibinfo {year} {2015})}\BibitemShut {NoStop}%
\bibitem [{\citenamefont {Pusey}(2015)}]{Pu15}%
  \BibitemOpen
  \bibfield  {author} {\bibinfo {author} {\bibfnamefont {M.~F.}\ \bibnamefont
  {Pusey}},\ }\href {\doibase 10.1364/JOSAB.32.000A56} {\bibfield  {journal}
  {\bibinfo  {journal} {J. Opt. Soc. Am. B}\ }\textbf {\bibinfo {volume}
  {32}},\ \bibinfo {pages} {A56} (\bibinfo {year} {2015})}\BibitemShut
  {NoStop}%
\bibitem [{\citenamefont {Chen}\ \emph
  {et~al.}(2016{\natexlab{a}})\citenamefont {Chen}, \citenamefont {Budroni},
  \citenamefont {Liang},\ and\ \citenamefont {Chen}}]{ChBu16}%
  \BibitemOpen
  \bibfield  {author} {\bibinfo {author} {\bibfnamefont {S.-L.}\ \bibnamefont
  {Chen}}, \bibinfo {author} {\bibfnamefont {C.}~\bibnamefont {Budroni}},
  \bibinfo {author} {\bibfnamefont {Y.-C.}\ \bibnamefont {Liang}}, \ and\
  \bibinfo {author} {\bibfnamefont {Y.-N.}\ \bibnamefont {Chen}},\ }\href
  {\doibase 10.1103/PhysRevLett.116.240401} {\bibfield  {journal} {\bibinfo
  {journal} {Phys. Rev. Lett.}\ }\textbf {\bibinfo {volume} {116}},\ \bibinfo
  {pages} {240401} (\bibinfo {year} {2016}{\natexlab{a}})}\BibitemShut
  {NoStop}%
\bibitem [{\citenamefont {Lee}\ \emph {et~al.}(2002)\citenamefont {Lee},
  \citenamefont {Kok},\ and\ \citenamefont {Dowling}}]{Le02}%
  \BibitemOpen
  \bibfield  {author} {\bibinfo {author} {\bibfnamefont {H.}~\bibnamefont
  {Lee}}, \bibinfo {author} {\bibfnamefont {P.}~\bibnamefont {Kok}}, \ and\
  \bibinfo {author} {\bibfnamefont {J.~P.}\ \bibnamefont {Dowling}},\ }\href
  {\doibase 10.1080/0950034021000011536} {\bibfield  {journal} {\bibinfo
  {journal} {Journal of Modern Optics}\ }\textbf {\bibinfo {volume} {49}},\
  \bibinfo {pages} {2325} (\bibinfo {year} {2002})}\BibitemShut {NoStop}%
\bibitem [{\citenamefont {Nielsen}\ and\ \citenamefont {Chuang}(2010)}]{NH}%
  \BibitemOpen
  \bibfield  {author} {\bibinfo {author} {\bibfnamefont {M.~A.}\ \bibnamefont
  {Nielsen}}\ and\ \bibinfo {author} {\bibfnamefont {I.~L.}\ \bibnamefont
  {Chuang}},\ }\href@noop {} {\emph {\bibinfo {title} {Quantum computation and
  quantum information}}}\ (\bibinfo  {publisher} {Cambridge university press},\
  \bibinfo {year} {2010})\BibitemShut {NoStop}%
\bibitem [{\citenamefont {Pellonp\"a\"a}(2014)}]{JP}%
  \BibitemOpen
  \bibfield  {author} {\bibinfo {author} {\bibfnamefont {J.-P.}\ \bibnamefont
  {Pellonp\"a\"a}},\ }\href@noop {} {\bibfield  {journal} {\bibinfo  {journal}
  {Journal of Physics A: Mathematical and Theoretical}\ }\textbf {\bibinfo
  {volume} {47}},\ \bibinfo {pages} {052002} (\bibinfo {year}
  {2014})}\BibitemShut {NoStop}%
\bibitem [{\citenamefont {Busch}\ and\ \citenamefont {Schmidt}(2010)}]{Busch}%
  \BibitemOpen
  \bibfield  {author} {\bibinfo {author} {\bibfnamefont {P.}~\bibnamefont
  {Busch}}\ and\ \bibinfo {author} {\bibfnamefont {H.-J.}\ \bibnamefont
  {Schmidt}},\ }\href {\doibase 10.1007/s11128-009-0109-x} {\bibfield
  {journal} {\bibinfo  {journal} {Quantum Information Processing}\ }\textbf
  {\bibinfo {volume} {9}},\ \bibinfo {pages} {143} (\bibinfo {year}
  {2010})}\BibitemShut {NoStop}%
\bibitem [{\citenamefont {Heinosaari}\ \emph
  {et~al.}(2015{\natexlab{c}})\citenamefont {Heinosaari}, \citenamefont
  {Kiukas},\ and\ \citenamefont {Reitzner}}]{HeKiRe15}%
  \BibitemOpen
  \bibfield  {author} {\bibinfo {author} {\bibfnamefont {T.}~\bibnamefont
  {Heinosaari}}, \bibinfo {author} {\bibfnamefont {J.}~\bibnamefont {Kiukas}},
  \ and\ \bibinfo {author} {\bibfnamefont {D.}~\bibnamefont {Reitzner}},\
  }\href {\doibase 10.1103/PhysRevA.92.022115} {\bibfield  {journal} {\bibinfo
  {journal} {Phys. Rev. A}\ }\textbf {\bibinfo {volume} {92}},\ \bibinfo
  {pages} {022115} (\bibinfo {year} {2015}{\natexlab{c}})}\BibitemShut
  {NoStop}%
\bibitem [{\citenamefont {Breuer}\ and\ \citenamefont
  {Petruccione}(2002)}]{BP02}%
  \BibitemOpen
  \bibfield  {author} {\bibinfo {author} {\bibfnamefont {H.-P.}\ \bibnamefont
  {Breuer}}\ and\ \bibinfo {author} {\bibfnamefont {F.}~\bibnamefont
  {Petruccione}},\ }\href@noop {} {\emph {\bibinfo {title} {The theory of open
  quantum systems}}}\ (\bibinfo  {publisher} {Oxford University Press on
  Demand},\ \bibinfo {year} {2002})\BibitemShut {NoStop}%
\bibitem [{\citenamefont {Liu}\ \emph {et~al.}(2011)\citenamefont {Liu},
  \citenamefont {Li}, \citenamefont {Huang}, \citenamefont {Li}, \citenamefont
  {Guo}, \citenamefont {Laine}, \citenamefont {Breuer},\ and\ \citenamefont
  {Piilo}}]{Li11}%
  \BibitemOpen
  \bibfield  {author} {\bibinfo {author} {\bibfnamefont {B.-H.}\ \bibnamefont
  {Liu}}, \bibinfo {author} {\bibfnamefont {L.}~\bibnamefont {Li}}, \bibinfo
  {author} {\bibfnamefont {Y.-F.}\ \bibnamefont {Huang}}, \bibinfo {author}
  {\bibfnamefont {C.-F.}\ \bibnamefont {Li}}, \bibinfo {author} {\bibfnamefont
  {G.-C.}\ \bibnamefont {Guo}}, \bibinfo {author} {\bibfnamefont {E.-M.}\
  \bibnamefont {Laine}}, \bibinfo {author} {\bibfnamefont {H.-P.}\ \bibnamefont
  {Breuer}}, \ and\ \bibinfo {author} {\bibfnamefont {J.}~\bibnamefont
  {Piilo}},\ }\href@noop {} {\bibfield  {journal} {\bibinfo  {journal} {Nature
  Physics}\ }\textbf {\bibinfo {volume} {7}} (\bibinfo {year}
  {2011})}\BibitemShut {NoStop}%
\bibitem [{\citenamefont {Chen}\ \emph
  {et~al.}(2016{\natexlab{b}})\citenamefont {Chen}, \citenamefont {Lambert},
  \citenamefont {Li}, \citenamefont {Miranowicz}, \citenamefont {Chen},\ and\
  \citenamefont {Nori}}]{ChLa16}%
  \BibitemOpen
  \bibfield  {author} {\bibinfo {author} {\bibfnamefont {S.-L.}\ \bibnamefont
  {Chen}}, \bibinfo {author} {\bibfnamefont {N.}~\bibnamefont {Lambert}},
  \bibinfo {author} {\bibfnamefont {C.-M.}\ \bibnamefont {Li}}, \bibinfo
  {author} {\bibfnamefont {A.}~\bibnamefont {Miranowicz}}, \bibinfo {author}
  {\bibfnamefont {Y.-N.}\ \bibnamefont {Chen}}, \ and\ \bibinfo {author}
  {\bibfnamefont {F.}~\bibnamefont {Nori}},\ }\href {\doibase
  10.1103/PhysRevLett.116.020503} {\bibfield  {journal} {\bibinfo  {journal}
  {Phys. Rev. Lett.}\ }\textbf {\bibinfo {volume} {116}},\ \bibinfo {pages}
  {020503} (\bibinfo {year} {2016}{\natexlab{b}})}\BibitemShut {NoStop}%
\bibitem [{\citenamefont {Simon}\ \emph {et~al.}(1994)\citenamefont {Simon},
  \citenamefont {Mukunda},\ and\ \citenamefont {Dutta}}]{SiMu94}%
  \BibitemOpen
  \bibfield  {author} {\bibinfo {author} {\bibfnamefont {R.}~\bibnamefont
  {Simon}}, \bibinfo {author} {\bibfnamefont {N.}~\bibnamefont {Mukunda}}, \
  and\ \bibinfo {author} {\bibfnamefont {B.}~\bibnamefont {Dutta}},\ }\href
  {\doibase 10.1103/PhysRevA.49.1567} {\bibfield  {journal} {\bibinfo
  {journal} {Phys. Rev. A}\ }\textbf {\bibinfo {volume} {49}},\ \bibinfo
  {pages} {1567} (\bibinfo {year} {1994})}\BibitemShut {NoStop}%
\bibitem [{\citenamefont {Holevo}\ and\ \citenamefont {Werner}(2001)}]{HoWe01}%
  \BibitemOpen
  \bibfield  {author} {\bibinfo {author} {\bibfnamefont {A.~S.}\ \bibnamefont
  {Holevo}}\ and\ \bibinfo {author} {\bibfnamefont {R.~F.}\ \bibnamefont
  {Werner}},\ }\href {\doibase 10.1103/PhysRevA.63.032312} {\bibfield
  {journal} {\bibinfo  {journal} {Phys. Rev. A}\ }\textbf {\bibinfo {volume}
  {63}},\ \bibinfo {pages} {032312} (\bibinfo {year} {2001})}\BibitemShut
  {NoStop}%
\bibitem [{\citenamefont {Kiukas}\ and\ \citenamefont
  {Schultz}(2013)}]{KiSc13}%
  \BibitemOpen
  \bibfield  {author} {\bibinfo {author} {\bibfnamefont {J.}~\bibnamefont
  {Kiukas}}\ and\ \bibinfo {author} {\bibfnamefont {J.}~\bibnamefont
  {Schultz}},\ }\href {http://stacks.iop.org/1751-8121/46/i=48/a=485303}
  {\bibfield  {journal} {\bibinfo  {journal} {Journal of Physics A:
  Mathematical and Theoretical}\ }\textbf {\bibinfo {volume} {46}},\ \bibinfo
  {pages} {485303} (\bibinfo {year} {2013})}\BibitemShut {NoStop}%
\bibitem [{\citenamefont {Werner}(1984)}]{We84}%
  \BibitemOpen
  \bibfield  {author} {\bibinfo {author} {\bibfnamefont {R.}~\bibnamefont
  {Werner}},\ }\href@noop {} {\bibfield  {journal} {\bibinfo  {journal} {J.
  Math. Phys.}\ }\textbf {\bibinfo {volume} {25}},\ \bibinfo {pages} {1404}
  (\bibinfo {year} {1984})}\BibitemShut {NoStop}%
\bibitem [{\citenamefont {Giedke}\ and\ \citenamefont
  {Ignacio~Cirac}(2002)}]{Giedke02}%
  \BibitemOpen
  \bibfield  {author} {\bibinfo {author} {\bibfnamefont {G.}~\bibnamefont
  {Giedke}}\ and\ \bibinfo {author} {\bibfnamefont {J.}~\bibnamefont
  {Ignacio~Cirac}},\ }\href {\doibase 10.1103/PhysRevA.66.032316} {\bibfield
  {journal} {\bibinfo  {journal} {Phys. Rev. A}\ }\textbf {\bibinfo {volume}
  {66}},\ \bibinfo {pages} {032316} (\bibinfo {year} {2002})}\BibitemShut
  {NoStop}%
\bibitem [{\citenamefont {Carmeli}\ \emph {et~al.}(2005)\citenamefont
  {Carmeli}, \citenamefont {Heinonen},\ and\ \citenamefont {Toigo}}]{CaHeTo05}%
  \BibitemOpen
  \bibfield  {author} {\bibinfo {author} {\bibfnamefont {C.}~\bibnamefont
  {Carmeli}}, \bibinfo {author} {\bibfnamefont {T.}~\bibnamefont {Heinonen}}, \
  and\ \bibinfo {author} {\bibfnamefont {A.}~\bibnamefont {Toigo}},\ }\href
  {http://stacks.iop.org/0305-4470/38/i=23/a=012} {\bibfield  {journal}
  {\bibinfo  {journal} {Journal of Physics A: Mathematical and General}\
  }\textbf {\bibinfo {volume} {38}},\ \bibinfo {pages} {5253} (\bibinfo {year}
  {2005})}\BibitemShut {NoStop}%
\bibitem [{\citenamefont {Busch}\ \emph {et~al.}(2007)\citenamefont {Busch},
  \citenamefont {Heinonen},\ and\ \citenamefont {Lahti}}]{BuHeLa07}%
  \BibitemOpen
  \bibfield  {author} {\bibinfo {author} {\bibfnamefont {P.}~\bibnamefont
  {Busch}}, \bibinfo {author} {\bibfnamefont {T.}~\bibnamefont {Heinonen}}, \
  and\ \bibinfo {author} {\bibfnamefont {P.}~\bibnamefont {Lahti}},\ }\href
  {\doibase http://dx.doi.org/10.1016/j.physrep.2007.05.006} {\bibfield
  {journal} {\bibinfo  {journal} {Physics Reports}\ }\textbf {\bibinfo {volume}
  {452}},\ \bibinfo {pages} {155 } (\bibinfo {year} {2007})}\BibitemShut
  {NoStop}%
\bibitem [{\citenamefont {Werner}(2004)}]{We04}%
  \BibitemOpen
  \bibfield  {author} {\bibinfo {author} {\bibfnamefont {R.~F.}\ \bibnamefont
  {Werner}},\ }\href@noop {} {\bibfield  {journal} {\bibinfo  {journal} {Qu.
  Inf. Comp.}\ }\textbf {\bibinfo {volume} {4}},\ \bibinfo {pages} {546}
  (\bibinfo {year} {2004})}\BibitemShut {NoStop}%
\bibitem [{\citenamefont {Lami}\ \emph {et~al.}(2016)\citenamefont {Lami},
  \citenamefont {Hirche}, \citenamefont {Adesso},\ and\ \citenamefont
  {Winter}}]{lami}%
  \BibitemOpen
  \bibfield  {author} {\bibinfo {author} {\bibfnamefont {L.}~\bibnamefont
  {Lami}}, \bibinfo {author} {\bibfnamefont {C.}~\bibnamefont {Hirche}},
  \bibinfo {author} {\bibfnamefont {G.}~\bibnamefont {Adesso}}, \ and\ \bibinfo
  {author} {\bibfnamefont {A.}~\bibnamefont {Winter}},\ }\href {\doibase
  10.1103/PhysRevLett.117.220502} {\bibfield  {journal} {\bibinfo  {journal}
  {Phys. Rev. Lett.}\ }\textbf {\bibinfo {volume} {117}},\ \bibinfo {pages}
  {220502} (\bibinfo {year} {2016})}\BibitemShut {NoStop}%
\bibitem [{\citenamefont {Schr\"odinger}(1935)}]{Schr35}%
  \BibitemOpen
  \bibfield  {author} {\bibinfo {author} {\bibfnamefont {E.}~\bibnamefont
  {Schr\"odinger}},\ }\href {\doibase 10.1017/S0305004100013554} {\bibfield
  {journal} {\bibinfo  {journal} {Mathematical Proceedings of the Cambridge
  Philosophical Society}\ }\textbf {\bibinfo {volume} {31}},\ \bibinfo {pages}
  {555} (\bibinfo {year} {1935})}\BibitemShut {NoStop}%
\bibitem [{\citenamefont {Adesso}(2007)}]{AdessoPhD}%
  \BibitemOpen
  \bibfield  {author} {\bibinfo {author} {\bibfnamefont {G.}~\bibnamefont
  {Adesso}},\ }\href {http://arxiv.org/abs/quant-ph/0702069} {\bibfield
  {journal} {\bibinfo  {journal} {{P}hD Thesis, Universit\`a degli Studi di
  Salerno, Italy}\ } (\bibinfo {year} {2007})},\ \bibinfo {note}
  {arXiv:quant-ph/0702069}\BibitemShut {NoStop}%
\bibitem [{\citenamefont {Williamson}(1936)}]{Wi36}%
  \BibitemOpen
  \bibfield  {author} {\bibinfo {author} {\bibfnamefont {J.}~\bibnamefont
  {Williamson}},\ }\href@noop {} {\bibfield  {journal} {\bibinfo  {journal}
  {Am. J. Math.}\ }\textbf {\bibinfo {volume} {58}},\ \bibinfo {pages} {141}
  (\bibinfo {year} {1936})}\BibitemShut {NoStop}%
\end{thebibliography}%

%\end{document}

%merlin.mbs apsrev4-1.bst 2010-07-25 4.21a (PWD, AO, DPC) hacked
%Control: key (0)
%Control: author (8) initials jnrlst
%Control: editor formatted (1) identically to author
%Control: production of article title (-1) disabled
%Control: page (0) single
%Control: year (1) truncated
%Control: production of eprint (0) enabled
%

\end{document}